\newcommand{\Hilb}[1][]{\ensuremath{\mathcal{H}_{#1}}}
\newcommand{\Ket}[1]{\ensuremath{\left \vert #1 \right \rangle}}
\newcommand{\Bra}[1]{\ensuremath{\left \langle #1 \right \vert}}
\newcommand{\BraKet}[2]{\ensuremath{\left \langle #1 \middle \vert #2 \right \rangle}}
\newcommand{\SBraKet}[1]{\ensuremath{\BraKet{#1}{#1}}}
\newcommand{\Outer}[2]{\ensuremath{\left \vert #1 \middle \rangle \middle \langle #2 \right \vert}}
\newcommand{\Proj}[1]{\ensuremath{\Outer{#1}{#1}}}
\newcommand{\Sand}[3]{\ensuremath{\left \langle #1 \middle \vert #2 \middle \vert #3 \right \rangle}}
\newcommand{\Abs}[1]{\ensuremath{\left \vert #1 \right \vert}}
\newcommand{\Expect}[1]{\ensuremath{\left \langle #1 \right \rangle}}
\newcommand{\QProb}[2]{\ensuremath{\Abs{\BraKet{#1}{#2}}^2}}
\newcommand{\Tr}[2][]{\ensuremath{\mathrm{Tr}_{#1} \left ( #2 \right )}}
\theoremstyle{definition}
\newtheorem{definition}{Definition}[section]
\theoremstyle{plain}
\newtheorem{theorem}[definition]{Theorem}
\newtheorem{proposition}[definition]{Proposition}
\newtheorem{corollary}[definition]{Corollary}
\theoremstyle{remark}
\newtheorem{remark}[definition]{Remark}
\title{Uncertainty from the Aharonov-Vaidman Identity}
\author{Matthew S. Leifer}
\affil{Institute for Quantum Studies and Schmid College of Science and Technology \authorcr Chapman University, One University Drive, Orange, CA 92866, USA}
\begin{document}

\maketitle

\begin{abstract}
	In this article, I show how the Aharonov-Vaidman identity $A\Ket{\psi} = \Expect{A}\Ket{\psi} + \Delta A \Ket{\psi^{\perp}_A}$ can be used to prove relations between the standard deviations of observables in quantum mechanics.  In particular, I review how it leads to a more direct and less abstract proof of the Robertson uncertainty relation $\Delta A \Delta B \geq \frac{1}{2} \Abs{\Expect{[A,B]}}$ than the textbook proof.  I discuss the relationship between these two proofs and show how the Cauchy-Schwarz inequality can be derived from the Aharonov-Vaidman identity. I give Aharonov-Vaidman based proofs of the Maccone-Pati uncertainty relations and I show how the Aharonov-Vaidman identity can be used to handle propagation of uncertainty in quantum mechanics.  Finally, I show how the Aharonov-Vaidman identity can be extended to mixed states and discuss how to generalize the results to the mixed case.
\end{abstract}

\section{Introduction}

Let $A$ be a Hermitian operator on a Hilbert space $\Hilb$.  Then, for any (not necessarily normalized) vector $\Ket{\psi}\in\Hilb$,
\begin{equation}
	\label{eq:AharonovVaidman}
	A\Ket{\psi} = \Expect{A}\Ket{\psi} + \Delta A \Ket{\psi^{\perp}_A},
\end{equation}
where $\Expect{A} = \Sand{\psi}{A}{\psi}/\SBraKet{\psi}$ is the expectation value of $A$, $\Delta A = \sqrt{\Expect{A^2} - \Expect{A}^2}$ is its standard deviation, and $\Ket{\psi^{\perp}_A}$ is a vector that is orthogonal to $\Ket{\psi}$, has equal norm $\SBraKet{\psi^{\perp}_A} = \SBraKet{\psi}$, and depends on the operator $A$.  

\Cref{eq:AharonovVaidman} is the \emph{Aharonov-Vaidman Identity}, which first appeared in \cite{Aharonov1990}.  Yakir Aharonov has stated that he ``[does not] understand why it doesn't appear in every quantum book''  \cite{Aharonov}.  The main purpose of this article is to explain why it should appear in undergraduate quantum mechanics textbooks\footnote{Other demonstrations of the usefulness of the Aharonov-Vaidman identity include its use in the proof that, for any state $\Ket{\psi}$ and any observable $A$, $\Ket{\psi}^{\otimes n}$ is an approximate eigenstate of the observable $\bar{A} = \frac{1}{n} \sum_{j=1}^n A_j$ for large $n$, where $A_j$ refers to $A$ acting on the $j^{\mathrm{th}}$ subsystem \cite{Aharonov1990}, and its use in deriving the minimum time required for evolution to an orthogonal quantum state \cite{Vaidman1992}.}.  

The uncertainty relation that is proved most often in quantum mechanics classes and textbooks is the Robertson relation \cite{Robertson1929}:
\begin{equation}
	\label{eq:Robertson}
	\Delta A \Delta B \geq \frac{1}{2} \Abs{\Expect{\left [ A, B\right ]}}, 
\end{equation}
where $[A,B] = AB-BA$ is the commutator.

As pointed out by Schr{\"o}dinger \cite{Schroedinger1930}, the Robertson relation can be extended to
\begin{equation}
	\label{eq:Schrodinger}
	\left ( \Delta A \right )^2 \left ( \Delta B \right )^2 \geq \Abs{\frac{1}{2}  \Expect{\left \{ A,B\right \}} -\Expect{A}\Expect{B}}^2 + \Abs{\frac{1}{2}\Expect{[A,B]}}^2,
\end{equation}
where $\{A,B\} = AB + BA$ is the anti-commutator.

Although not often emphasized in quantum mechanics classes, the Schr{\"o}dinger relation is not harder to prove than the Robertson relation.  In fact, the standard textbook proof of the Robertson relation effectively proves the Schr{\"o}dinger relation and then throws away the anti-commutator term.

The proof almost universally adopted in textbooks is based on the Cauchy-Schwarz inequality.  While this proof is elementary for those familiar with the mathematics of Hilbert spaces, it can be daunting for undergraduate physics students, who are likely encountering Hilbert spaces for the first time along with quantum mechanics.  

In this article, I will review more direct proofs of \cref{eq:Robertson} and \cref{eq:Schrodinger} from the Aharonov-Vaidman identity that only make use of basic properties of complex numbers and inner products.  These proofs previously appeared in \cite{Goldenberg1996} and the proof of the Robertson relation is also problem 3.10 in Aharonov and Rohrlich's book ``Quantum Paradoxes'' \cite{Aharonov2005}.  The proof of the Aharonov-Vaidman identity itself is uses similar ideas to one of the standard proofs of the Cauchy-Schwarz identity, but is perhaps more memorable to undergraduate physics students because it uses concepts that have a physical meaning, i.e.\ expectation values and standard deviations.  The proof of the Robertson and Schr{\"o}dinger relations so obtained is not independent of the standard Cauchy-Schwarz based proof.  I shall discuss their relationship and show that the Cauchy-Schwarz inequality can itself be derived from \cref{eq:AharonovVaidman}.  The main virtue of using the Aharonov-Vaidman based proof of the uncertainty relation is that it is more direct and involves fewer abstractions.  

To be clear, I am not against using or teaching the Cauchy-Schwarz inequality.  It has been called ``one of the most widely used and important inequalities in all of mathematics'' \cite{Steele2004}.  In fact, the Aharonov-Vaidman based proof still uses one instance of the Cauchy-Schwarz inequality, namely that if $\Ket{\psi}$ and $\Ket{\phi}$ are unit vectors then $\Abs{\BraKet{\phi}{\psi}} \leq 1$.  But this is easily motivated by the idea that $\BraKet{\phi}{\psi}$ is a generalization of the cosine of an angle, and it is used in a more direct way than in the standard proof.  Students of quantum mechanics also need to know the Cauchy-Schwarz inequality to prove that the Born rule always yields well-defined probabilities.  Physics students should learn the Cauchy-Schwarz inequality.  I just think it should be used in a less abstract way where possible.

Besides the Robertson and Schr{\"o}dinger relations, many other uncertainty relations are known.  Indeed, since uncertainty relations have found applications in quantum information science \cite{Fuchs1996, Hofmann2003, Guehne2004, Koashi2006, Berta2010, Majumdar2016, YungerHalpern2019} and quantum foundations \cite{Oppenheim2010, Catani2022}, proving new ones has become something of a sport.  The two most common classes of uncertainty relations are those based on entropy \cite{Coles2017} and those based on standard deviations \cite{Robertson1929, Schroedinger1930, Pati2007}.  Many of the standard deviation based relations can be derived from the Aharonov-Vaidman relation.  I include a proof of the Maccone-Pati uncertainty relations \cite{Maccone2015} to illustrate this.  While these are not the most recent or tightest known uncertainty relations, I include them because they have a simple and elegant Aharonov-Vaidman based proof.  For more recent work on standard deviation uncertainty relations, see \cite{Bannur2015, Li2015, Yao2015, Abbott2016, Chen2016, Qin2016, Song2016, Xiao2016, Mondal2017, Song2017, Zhang2017, Zheng2017, Dodonov2018, Guise2018, Busch2019, Giorda2019, Zheng2020, Li2021, Zhang2021, Chiew2022, Xiao2022}.

Another place where relationships between standard deviations are important is in the propagation of uncertainty.  In classical statistics, if random variables $X_1, X_2,\cdots,X_n$ have standard deviations $\Delta X_1, \Delta X_2, \cdots \Delta X_n$ then a function of them $f(X_1,X_2,\cdots,X_n)$ has standard deviation $\Delta f$ that is a function of $\Delta X_1, \Delta X_2, \cdots \Delta X_n$ (and their correlations if the variables are not independent).  Formulas for the propagation of uncertainty tell us how to compute this function, and are commonly used to estimate experimental errors.  In quantum mechanics, similar formulas can be derived relating the standard deviations of observables.  They differ from their classical counterparts due to the fact that quantum observables do not commute, but provided this is taken care of they can be derived by the same methods as in the classical case.  However, they can alternatively be derived from the Aharonov-Vaidman identity, as I shall explain.

Although the Aharonov-Vaidman identity is usually discussed for pure quantum states, it can be extended to mixed states, either by use of purification or an equivalent concept called an \emph{amplitude operator}.  Relations between standard deviations can be extended to mixed states, but obtaining tight bounds is sometimes more difficult than in the pure case due to the need to optimize over all purifications or amplitude operators that can represent a given mixed state.

The remainder of this article is structured as follows.  \Cref{AVProof} gives the proof of the Aharonov-Vaidman identity and a corollary that is useful for understanding the equality conditions in uncertainty relations.  \Cref{Robertson} presents the proof of the Robertson and Schr{\"o}dinger relations based on the Aharonov-Vaidman identity.  \Cref{Cauchy} explains the relationship with the standard textbook proof of the Robertson relation and explains how the Cauchy-Schwarz inequality can be derived from the Aharonov-Vaidman identity.  \Cref{Pedagogy} comments on the effective teaching of the Robertson uncertainty relations via the Aharonov-Vaidman identity.  \Cref{Other} presents Aharonov-Vaidman based proofs of the Maccone-Pati uncertainty relations.  \Cref{Prop} describes how to use the Aharonov-Vaidman identity to derive formulas for the propagation of quantum uncertainty.  \Cref{Mixed} explains how to generalize the Aharonov-Vaidman relation to mixed states using amplitude operators.  (The relationship between amplitude operators and purifications is discussed in \cref{App:Purify}.)  Finally, \cref{Conc} presents the summary and conclusions.  

I intend this article  to be pedagogical and self-contained, so as to be accessible to undergraduate students and anyone teaching introductory quantum mechanics.

\section{Proof of the Aharonov Vaidman Identity}

\label{AVProof}

Sometimes, it is useful to generalize the Aharonov-Vaidman identity to non-Hermitian operators, so we prove the more general version here.

\begin{proposition}[The Aharonov-Vaidman Identity]
	\label{Prop:AV}
	Let $A$ be a linear operator on a Hilbert space $\Hilb$ and let $\Ket{\psi}$ be a (not necessarily normalized) vector in $\mathcal{H}$.  Then,
	\begin{equation}
		\label{eg:GenAV}
		A \Ket{\psi} = \Expect{A} \Ket{\psi} + \Delta A \Ket{\psi^{\perp}_A},
	\end{equation}
	where $\Expect{A} = \Sand{\psi}{A}{\psi}/\SBraKet{\psi}$, $\Delta A = \sqrt{\Expect{A^{\dagger}A} - \Abs{\Expect{A}}^2}$, and $\Ket{\psi^{\perp}_A}$ is a vector orthogonal to $\Ket{\psi}$ that depends on both $\Ket{\psi}$ and $A$ and satisfies $\SBraKet{\psi^{\perp}_A} = \SBraKet{\psi}$.
\end{proposition}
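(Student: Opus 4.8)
The plan is to \emph{define} the vector $\Ket{\psi^{\perp}_A}$ by simply rearranging \cref{eg:GenAV} and then verify that the vector so defined has all the claimed properties. Concretely, I would introduce the residual vector $\Ket{\chi} := \left( A - \Expect{A} \right)\Ket{\psi}$, so that the decomposition $A\Ket{\psi} = \Expect{A}\Ket{\psi} + \Ket{\chi}$ holds by construction. Everything then reduces to two computations: (i) $\Ket{\chi}$ is orthogonal to $\Ket{\psi}$, and (ii) $\SBraKet{\chi} = \left( \Delta A \right)^2 \SBraKet{\psi}$. Granting these, one sets $\Ket{\psi^{\perp}_A} := \Ket{\chi}/\Delta A$ whenever $\Delta A \neq 0$, and the three asserted properties follow immediately.

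For (i), I would compute $\BraKet{\psi}{\chi} = \Sand{\psi}{A}{\psi} - \Expect{A}\SBraKet{\psi}$, which vanishes directly from the definition $\Expect{A} = \Sand{\psi}{A}{\psi}/\SBraKet{\psi}$. For (ii), I would expand $\SBraKet{\chi} = \Sand{\psi}{\left( A - \Expect{A} \right)^{\dagger}\left( A - \Expect{A} \right)}{\psi}$ using $\left( A - \Expect{A} \right)^{\dagger} = A^{\dagger} - \overline{\Expect{A}}$ together with the elementary fact $\Expect{A^{\dagger}} = \overline{\Expect{A}}$; the two cross terms each contribute $-\Abs{\Expect{A}}^2$ and, combined with the $+\Abs{\Expect{A}}^2$ from the final term, leave $\Expect{A^{\dagger}A} - \Abs{\Expect{A}}^2 = \left( \Delta A \right)^2$, the whole thing multiplied by $\SBraKet{\psi}$. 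The only place requiring any care is keeping the complex conjugates straight, since $A$ is not assumed Hermitian here; this bookkeeping is really the ``hard part,'' and it is entirely routine.

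Finally I would dispose of the degenerate case $\Delta A = 0$. By (ii) this forces $\SBraKet{\chi} = 0$, hence $\Ket{\chi} = 0$ and $A\Ket{\psi} = \Expect{A}\Ket{\psi}$, so \cref{eg:GenAV} holds with the term $\Delta A \Ket{\psi^{\perp}_A}$ equal to zero for \emph{any} choice of $\Ket{\psi^{\perp}_A}$; one may then take any vector orthogonal to $\Ket{\psi}$ with $\SBraKet{\psi^{\perp}_A} = \SBraKet{\psi}$ (such a vector exists whenever $\dim \Hilb \geq 2$, and the case $\dim\Hilb = 1$ is vacuous). I would close by noting that, away from this degenerate case, the argument in fact determines $\Ket{\psi^{\perp}_A}$ \emph{uniquely}, which is precisely the fact needed for the corollary on equality conditions.
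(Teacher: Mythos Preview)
Your proof is correct and is essentially the same argument as the paper's, just organized a bit differently. The paper writes the generic orthogonal decomposition $A\Ket{\psi} = \alpha\Ket{\psi} + \beta\Ket{\psi^{\perp}}$ with $\SBraKet{\psi^{\perp}} = \SBraKet{\psi}$ and then solves for $\alpha$ and $\Abs{\beta}$ by taking the inner product with $\Ket{\psi}$ and with $A\Ket{\psi}$ itself, finally absorbing the leftover phase of $\beta$ into $\Ket{\psi^{\perp}_A}$; you instead \emph{define} $\Ket{\chi} = (A - \Expect{A})\Ket{\psi}$ outright and verify orthogonality and norm directly. The computations are the same (your expansion of $\SBraKet{\chi}$ is equivalent to the paper's use of $\Sand{\psi}{A^{\dagger}A}{\psi}$ together with orthogonality), but your route has two small advantages: it sidesteps the phase-absorption step entirely, since $\Ket{\chi}$ is canonical from the start, and it treats the degenerate case $\Delta A = 0$ explicitly, which the paper leaves implicit. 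Your closing remark that $\Ket{\psi^{\perp}_A}$ is uniquely determined when $\Delta A \neq 0$ is correct and indeed what makes \cref{cor:cor} well-posed.
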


Note that, if $A$ is Hermitian, then this reduces to \cref{eq:AharonovVaidman}, where $\Expect{A}$ and $\Delta A$ are the expectation value and standard deviation.  In general, $\Expect{A}$ is a complex number, but $\Delta A$ is always real and non-negative.

For most of what we need to do, it is sufficient to consider the case where $\Ket{\psi}$ is a unit vector, in which case $\Ket{\psi^{\perp}_A}$ is also a unit vector.  The exception is the proof of the Cauchy-Schwarz inequality (\cref{prop:CS} in \cref{Cauchy}), which uses the identity with an unnormalized vector.

\begin{proof}
	Given a vector $\Ket{\psi} \in \Hilb$, any other vector $\Ket{\phi}\in\Hilb$ can be written as $\Ket{\phi} = \alpha \Ket{\psi} + \beta \Ket{\psi^{\perp}}$, where $\alpha$ and $\beta$ are complex coefficients and $\Ket{\psi^{\perp}}$ is some vector that is orthogonal to $\Ket{\psi}$.  By an appropriate rescaling of $\beta$, we can ensure that $\SBraKet{\psi^{\perp}} = \SBraKet{\psi}$.  Applying this to $\Ket{\phi} = A\Ket{\psi}$ gives
	\begin{equation}
		\label{eq:Decomp}
		A\Ket{\psi} = \alpha \Ket{\psi} + \beta \Ket{\psi^{\perp}}.
	\end{equation}
	To determine $\alpha$, take the inner product of \cref{eq:Decomp} with $\Ket{\psi}$, which gives
	\begin{equation}
		\Sand{\psi}{A}{\psi} = \alpha\SBraKet{\psi}.
	\end{equation}
	Rearranging this gives $\alpha=\Expect{A}$.
	
	To determine $\beta$, substitute $\alpha = \Expect{A}$ into \cref{eq:Decomp} and take the inner product of $A\Ket{\psi}$ with itself to obtain
	\begin{align*}
		\Bra{\psi}A^{\dagger}A\Ket{\psi} & = \Abs{\Expect{A}}^2 \SBraKet{\psi} + |\beta|^2 \SBraKet{\psi^{\perp}} \\
		& =  \Abs{\Expect{A}}^2 \SBraKet{\psi} + |\beta|^2 \SBraKet{\psi}, 
	\end{align*}
	where we have used $\SBraKet{\psi^{\perp}} = \SBraKet{\psi}$.

	Rearranging and using $\Expect{A^{\dagger}A} = \Sand{\psi}{A^{\dagger}A}{\psi}/\SBraKet{\psi}$ gives
	\begin{equation}
		|\beta|^2 = \Expect{A^{\dagger}A} - \Abs{\Expect{A}}^2 = (\Delta A)^2.
	\end{equation}
	This means that $\beta = (\Delta A) e^{i\theta}$ for some phase angle $\theta$.  If we define $\Ket{\psi^{\perp}_A} = e^{i\theta}\Ket{\psi^{\perp}}$ then $\Ket{\psi^{\perp}_A}$ is still orthogonal to $\Ket{\psi}$, its norm is unchanged, and we have \cref{eg:GenAV}.
\end{proof}

The following corollary is useful for finding the conditions for equality in uncertainty relations.

\begin{corollary}
	\label{cor:cor}
	In general, for two operators $A$ and $B$, and for a unit vector $\Ket{\psi}$,
	\begin{equation}
	\label{eq:covariance}
	\BraKet{\psi^{\perp}_A}{\psi^{\perp}_B} = \frac{\Expect{A^{\dagger}B} - \Expect{A}^*\Expect{B}}{\Delta A\Delta B}.
	\end{equation}
\end{corollary}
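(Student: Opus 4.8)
The plan is to apply the Aharonov-Vaidman identity (\cref{Prop:AV}) twice with respect to the \emph{same} unit vector $\Ket{\psi}$ — once for $A$ and once for $B$ — and then simply take the inner product of the two resulting vector equations. Writing $A\Ket{\psi} = \Expect{A}\Ket{\psi} + \Delta A\Ket{\psi^{\perp}_A}$ and $B\Ket{\psi} = \Expect{B}\Ket{\psi} + \Delta B\Ket{\psi^{\perp}_B}$, the inner product of the left-hand sides is $\Bra{\psi}A^{\dagger}B\Ket{\psi} = \Expect{A^{\dagger}B}$, using $\SBraKet{\psi} = 1$.

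Next I would expand the inner product of the right-hand sides bilinearly. Since the $A$-equation occupies the bra slot, its scalar coefficients are complex-conjugated; as $\Delta A$ is real and non-negative this affects only $\Expect{A}$. Of the four terms, three cross terms vanish: $\BraKet{\psi}{\psi^{\perp}_B} = 0$ and $\BraKet{\psi^{\perp}_A}{\psi} = 0$ by the orthogonality clause of \cref{Prop:AV}, while the two diagonal terms use $\SBraKet{\psi} = 1$. This leaves the single relation
\[
	\Expect{A^{\dagger}B} = \Expect{A}^*\Expect{B} + \Delta A\,\Delta B\,\BraKet{\psi^{\perp}_A}{\psi^{\perp}_B},
\]
and rearranging then dividing by $\Delta A\,\Delta B$ gives \cref{eq:covariance}.

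The only point requiring care — and the only thing resembling an obstacle — is that the final division is legitimate only when $\Delta A \neq 0$ and $\Delta B \neq 0$; if either standard deviation vanishes, the corresponding orthogonal vector $\Ket{\psi^{\perp}}$ is not uniquely fixed by \cref{Prop:AV} and the statement is vacuous, so the corollary should be read under the implicit hypothesis $\Delta A, \Delta B > 0$. Beyond this bookkeeping the proof is a one-line computation once the identity is in hand, which is precisely the pedagogical point being made: the ``covariance'' structure underlying the uncertainty relations falls out immediately from \cref{eg:GenAV} without invoking Cauchy-Schwarz or any abstract Hilbert-space machinery.
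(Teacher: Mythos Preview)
Your proof is correct and follows exactly the paper's own argument: apply \cref{Prop:AV} to $A$ and to $B$, take the inner product of the two resulting equations, and rearrange. Your additional remark on the degenerate case $\Delta A = 0$ or $\Delta B = 0$ is a welcome clarification that the paper leaves implicit.
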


\begin{proof}
	From \cref{Prop:AV}, we have
	\begin{align}
		A\Ket{\psi} & = \Expect{A}\Ket{\psi} + \Delta A \Ket{\psi^{\perp}_A}, \\
		B\Ket{\psi} & = \Expect{B}\Ket{\psi} + \Delta B \Ket{\psi^{\perp}_B}.
	\end{align}
	Taking the inner product of these gives
	\begin{equation}
		\Bra{\psi}A^{\dagger}B\Ket{\psi} = \Expect{A}^*\Expect{B} + \Delta A \Delta B \BraKet{\psi^{\perp}_A}{\psi^{\perp}_B},
	\end{equation}
	Rearranging gives the desired result.
\end{proof}

Note that, if $A$ and $B$ are Hermitian then we have
\begin{equation}
\label{eq:correlation}
\BraKet{\psi^{\perp}_A}{\psi^{\perp}_B} = \frac{\Expect{AB} - \Expect{A}\Expect{B}}{\Delta A\Delta B}.
\end{equation}
If it is also the case that $[A,B] = 0$ then \cref{eq:correlation} is the \emph{correlation}, denoted $\text{corr}_{A,B}$, that would be obtained from a joint measurement of $A$ and $B$.  The correlation is a well-known statistical measure of how two random variables are related to one another.  \Cref{eq:correlation} is a formal generalization of the correlation, so we will also denote it $\text{corr}_{A,B}$. However, if $A$ and $B$ do not commute then $\text{corr}_{A,B}$ is generally a complex number, there is no joint measurement of $A$ and $B$ of which $\text{corr}_{A,B}$ could be the correlation, and $AB$ is not an observable.

The real and imaginary parts of $\text{corr}_{A,B}$ are
\begin{align}
	\text{Re} \left ( \text{corr}_{A,B} \right ) & = \frac{1}{2} \left ( \BraKet{\psi^{\perp}_A}{\psi^{\perp}_B} + \BraKet{\psi^{\perp}_B}{\psi^{\perp}_A} \right ) = \frac{\frac{1}{2}\Expect{\left \{ A, B \right \}} - \Expect{A}\Expect{B}}{\Delta A\Delta B} \label{eq:Rcorr} \\
	\text{Im} \left ( \text{corr}_{A,B} \right ) & = \frac{1}{2i} \left ( \BraKet{\psi^{\perp}_A}{\psi^{\perp}_B} - \BraKet{\psi^{\perp}_B}{\psi^{\perp}_A}\right ) = \frac{\Expect{\left [ A, B \right ]}}{2i\Delta A\Delta B}, \label{eq:Imcorr}
\end{align}
The real part is also a formal generalization of the correlation in that it reduces to the classical formula when $A$ and $B$ commute.  We denote it $\text{Rcorr}_{A,B}$.

\section{The Robertson and Schr{\"o}dinger Uncertainty Relations}

\label{Robertson}

We are now in a position to prove the Robertson and Schr{\"o}dinger uncertainty relations.

\begin{proposition}[The Robertson Uncertainty Relation]
	Let $A$ and $B$ be two Hermitian operators on a Hilbert space \Hilb.  Then, for any unit vector $\Ket{\psi} \in \Hilb$
	\begin{equation}
		\label{eq:Robertson2}
		\Delta A \Delta B \geq \frac{1}{2}\Abs{\Expect{[A,B]}}.
	\end{equation}
\end{proposition}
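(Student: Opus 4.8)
The plan is to apply the Aharonov-Vaidman identity (\Cref{Prop:AV}) separately to the two operators, writing $A\Ket{\psi} = \Expect{A}\Ket{\psi} + \Delta A\Ket{\psi^{\perp}_A}$ and $B\Ket{\psi} = \Expect{B}\Ket{\psi} + \Delta B\Ket{\psi^{\perp}_B}$. Because $\Ket{\psi}$ is a unit vector, $\Ket{\psi^{\perp}_A}$ and $\Ket{\psi^{\perp}_B}$ are unit vectors orthogonal to $\Ket{\psi}$. The core idea is that the commutator term $\Expect{[A,B]}$ is, up to the factor $2i\,\Delta A\Delta B$, nothing but the imaginary part of the single complex number $\BraKet{\psi^{\perp}_A}{\psi^{\perp}_B}$, and the Robertson bound then drops out of the most elementary instance of the Cauchy-Schwarz inequality.

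First I would invoke \Cref{cor:cor}, or more specifically its Hermitian form \cref{eq:correlation}, which already records that $\Delta A\Delta B\,\BraKet{\psi^{\perp}_A}{\psi^{\perp}_B} = \Expect{AB} - \Expect{A}\Expect{B}$; since $A$ and $B$ are Hermitian the subtracted term $\Expect{A}\Expect{B}$ is real, and $\overline{\Expect{AB}} = \Expect{BA}$. Subtracting the complex conjugate of this identity from itself therefore kills the real piece and yields $\Expect{[A,B]} = 2i\,\Delta A\Delta B\,\mathrm{Im}\BraKet{\psi^{\perp}_A}{\psi^{\perp}_B}$, i.e.\ exactly \cref{eq:Imcorr} rearranged. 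Hence $\Abs{\Expect{[A,B]}} = 2\,\Delta A\Delta B\,\Abs{\mathrm{Im}\BraKet{\psi^{\perp}_A}{\psi^{\perp}_B}}$.

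To finish, I would apply two trivial facts about any complex number $z$, here taken to be $z = \BraKet{\psi^{\perp}_A}{\psi^{\perp}_B}$: that $\Abs{\mathrm{Im}(z)} \leq \Abs{z}$, and that $\Abs{z} \leq 1$ because it is the inner product of two unit vectors (the one instance of Cauchy-Schwarz flagged in the introduction, read as ``the cosine of an angle has modulus at most one''). Chaining these gives $\Abs{\Expect{[A,B]}} \leq 2\,\Delta A\Delta B\,\Abs{\BraKet{\psi^{\perp}_A}{\psi^{\perp}_B}} \leq 2\,\Delta A\Delta B$, which is \cref{eq:Robertson2}.

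I do not expect any genuine obstacle; the only delicate point is the degenerate case $\Delta A = 0$ or $\Delta B = 0$, where \cref{eq:correlation} has a vanishing denominator. But then $\Ket{\psi}$ is an eigenvector of the operator in question, so $\Expect{[A,B]} = 0$ and \cref{eq:Robertson2} holds trivially; alternatively, one can sidestep division entirely by substituting the two Aharonov-Vaidman decompositions directly into $\Expect{[A,B]} = \Sand{\psi}{AB}{\psi} - \Sand{\psi}{BA}{\psi}$ and only then taking moduli. In either form the argument is a couple of lines once the commutator has been identified with $\mathrm{Im}\BraKet{\psi^{\perp}_A}{\psi^{\perp}_B}$.
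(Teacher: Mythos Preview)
Your proposal is correct and follows essentially the same route as the paper: apply the Aharonov-Vaidman identity to $A$ and $B$, identify $\Expect{[A,B]}$ with $2i\,\Delta A\,\Delta B\,\mathrm{Im}\BraKet{\psi^{\perp}_A}{\psi^{\perp}_B}$, and bound the imaginary part by $1$ using that the two perpendicular states are unit vectors. The only cosmetic difference is that you route through \Cref{cor:cor}/\cref{eq:correlation} and then subtract its conjugate, whereas the paper writes out $\Expect{AB}$ and $\Expect{BA}$ separately (your suggested ``alternative'' at the end) and thereby never divides by $\Delta A\Delta B$; this sidesteps the degenerate case you flag, though your handling of it is fine.
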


\begin{proof}
	From the Aharonov-Vaidman identity, we have
	\begin{align}
		A\Ket{\psi} & = \Expect{A}\Ket{\psi} + \Delta A \Ket{\psi^{\perp}_A}, \\
		B\Ket{\psi} & = \Expect{B}\Ket{\psi} + \Delta B \Ket{\psi^{\perp}_B}.
	\end{align}
	Taking the inner product of these two equations and its complex conjugate gives
	\begin{align}
		\Sand{\psi}{AB}{\psi} & = \Expect{A}\Expect{B} + \Delta A \Delta B \BraKet{\psi^{\perp}_A}{\psi^{\perp}_B} \label{eq:product1} \\
		\Sand{\psi}{BA}{\psi} & = \Expect{A}\Expect{B} + \Delta A \Delta B \BraKet{\psi^{\perp}_B}{\psi^{\perp}_A}. \label{eq:product2}
	\end{align}
	Subtracting these two equations gives
	\begin{equation}
		\Sand{\psi}{(AB-BA)}{\psi} = \Delta A \Delta B \left (\BraKet{\psi^{\perp}_A}{\psi^{\perp}_B} - \BraKet{\psi^{\perp}_B}{\psi^{\perp}_A} \right ),
	\end{equation}
	or,
	\begin{equation}
		\label{eq:commexpect}
		\Expect{[A,B]} = \Delta A \Delta B \left (\BraKet{\psi^{\perp}_A}{\psi^{\perp}_B} - \BraKet{\psi^{\perp}_B}{\psi^{\perp}_A} \right ),
	\end{equation}	
	Since $\BraKet{\psi^{\perp}_B}{\psi^{\perp}_A}$ is the complex conjugate of $\BraKet{\psi^{\perp}_A}{\psi^{\perp}_B}$, we can rewrite this as
	\begin{equation}
		\Expect{[A,B]} = 2 i \Delta A \Delta B \mathrm{Im} \left ( \BraKet{\psi^{\perp}_A}{\psi^{\perp}_B} \right ). 
	\end{equation}
	Taking the absolute value of both sides and rearranging gives
	\begin{equation}
		\Delta A \Delta B \Abs{\mathrm{Im} \left ( \BraKet{\psi^{\perp}_A}{\psi^{\perp}_B} \right )} = \frac{1}{2} \Abs{\Expect{[A,B]}}.
	\end{equation}
	Because $\Ket{\psi^{\perp}_A}$ and $\Ket{\psi^{\perp}_B}$ are unit vectors, $0 \leq \QProb{\psi^{\perp}_A}{\psi^{\perp}_B} \leq 1$, and hence the absolute value of the imaginary part of $\BraKet{\psi^{\perp}_B}{\psi^{\perp}_A}$ is also bounded between $0$ and $1$.  Hence, we have
	\begin{equation}
		\Delta A \Delta B \geq \frac{1}{2} \Abs{\Expect{[A,B]}}.
	\end{equation}
\end{proof}

The condition for equality in the Robertson relation is $\Abs{\text{Im} \left ( \BraKet{\psi^{\perp}_A}{\psi^{\perp}_B}\right )} = 1$ or, equivalently, $\text{corr}_{A,B} = \pm i$.  States that saturate the inequality are called \emph{(Robertson) intelligent states}.  The condition $\text{corr}_{A,B} = \pm i$ can be used to find intelligent states, although this is not easier than solving for equality in the Robertson relation directly.

\begin{proposition}[The Schr{\"o}dinger Uncertainty Relation]
	Let $A$ and $B$ be two Hermitian operators on a Hilbert space \Hilb.  Then, for any unit vector $\Ket{\psi} \in \Hilb$
	\begin{equation}
	\label{eq:Schrodinger2}
	\left ( \Delta A \right )^2 \left ( \Delta B \right )^2 \geq \Abs{\frac{1}{2}  \Expect{\left \{ A,B\right \}} -\Expect{A}\Expect{B}}^2 + \Abs{\frac{1}{2}\Expect{[A,B]}}^2.
	\end{equation}
\end{proposition}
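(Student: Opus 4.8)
The plan is to reuse the machinery of the Robertson proof almost verbatim, but to retain the anti-commutator contribution that was discarded there. In the Robertson proof, \cref{eq:product1} and \cref{eq:product2} were subtracted to isolate $\Expect{[A,B]}$; here I would instead keep \cref{eq:product1} intact, rearranging it to
\begin{equation*}
	\Delta A \Delta B \BraKet{\psi^{\perp}_A}{\psi^{\perp}_B} = \Expect{AB} - \Expect{A}\Expect{B},
\end{equation*}
and then take the modulus squared of both sides. Since $\Ket{\psi^{\perp}_A}$ and $\Ket{\psi^{\perp}_B}$ are unit vectors, $\QProb{\psi^{\perp}_A}{\psi^{\perp}_B} \leq 1$ --- the single instance of the Cauchy-Schwarz inequality that the paper permits itself --- so this yields immediately
\begin{equation*}
	\left ( \Delta A \right )^2 \left ( \Delta B \right )^2 \geq \Abs{\Expect{AB} - \Expect{A}\Expect{B}}^2.
\end{equation*}

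Next I would split $\Expect{AB} - \Expect{A}\Expect{B}$ into its real and imaginary parts. Writing $AB = \frac{1}{2}\{A,B\} + \frac{1}{2}[A,B]$ and using that, for Hermitian $A$ and $B$, the anti-commutator $\{A,B\}$ is Hermitian (so $\Expect{\{A,B\}}$ is real) while the commutator $[A,B]$ is anti-Hermitian (so $\Expect{[A,B]}$ is purely imaginary), one has
\begin{equation*}
	\Expect{AB} - \Expect{A}\Expect{B} = \left ( \frac{1}{2}\Expect{\{A,B\}} - \Expect{A}\Expect{B} \right ) + \frac{1}{2}\Expect{[A,B]},
\end{equation*}
which is a sum of a real number and a purely imaginary number. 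Its modulus squared is therefore the sum of the squares of the two pieces, and this sum is exactly the right-hand side of \cref{eq:Schrodinger2}. The Robertson relation \cref{eq:Robertson2} then drops out by discarding the non-negative anti-commutator term.

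I do not expect a genuine obstacle here: the only point requiring any care is confirming that the anti-commutator piece is real and the commutator piece purely imaginary, so that the cross term in $\Abs{z_1 + z_2}^2$ vanishes and $\Abs{z_1 + z_2}^2 = \Abs{z_1}^2 + \Abs{z_2}^2$. This is the same real/imaginary decomposition that is buried inside the textbook Cauchy-Schwarz proof, and it is already displayed in \cref{eq:Rcorr} and \cref{eq:Imcorr}. Finally, I would note the equality condition: equality holds if and only if $\QProb{\psi^{\perp}_A}{\psi^{\perp}_B} = 1$, i.e.\ $\Ket{\psi^{\perp}_A}$ and $\Ket{\psi^{\perp}_B}$ are proportional, which by \cref{cor:cor} amounts to $\Abs{\text{corr}_{A,B}} = 1$.
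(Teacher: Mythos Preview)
Your proposal is correct and follows essentially the same approach as the paper: both arrive at the identity $\Delta A \Delta B \BraKet{\psi^{\perp}_A}{\psi^{\perp}_B} = \frac{1}{2}\Expect{\{A,B\}} - \Expect{A}\Expect{B} + \frac{1}{2}\Expect{[A,B]}$, take the modulus squared, use $\QProb{\psi^{\perp}_A}{\psi^{\perp}_B} \leq 1$, and split the right-hand side into its real (anti-commutator) and imaginary (commutator) parts. Your route is in fact slightly more direct --- you obtain that identity straight from \cref{eq:product1} via $AB = \tfrac{1}{2}\{A,B\} + \tfrac{1}{2}[A,B]$, whereas the paper first sums \cref{eq:product1} and \cref{eq:product2} and then adds back \cref{eq:commexpect}, a minor detour that lands in the same place.
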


\begin{proof}
	Taking the sum of \cref{eq:product1} and \cref{eq:product2} gives
	\begin{equation}
		\Expect{\left \{ A,B \right \}} = 2\Expect{A}\Expect{B} + \Delta A \Delta B \left ( \BraKet{\psi^{\perp}_A}{\psi^{\perp}_B} + \BraKet{\psi^{\perp}_B}{\psi^{\perp}_A}\right ),
	\end{equation}
	or,
	\begin{equation}
		\Expect{\left \{ A,B \right \}} - 2\Expect{A}\Expect{B} = \Delta A \Delta B \left ( \BraKet{\psi^{\perp}_A}{\psi^{\perp}_B} + \BraKet{\psi^{\perp}_B}{\psi^{\perp}_A}\right ).
	\end{equation}
	Adding this to \cref{eq:commexpect} gives
	\begin{equation}
		\Expect{\left \{ A,B \right \}} - 2\Expect{A}\Expect{B} + \Expect{[A,B]} = 2 \Delta A \Delta B \BraKet{\psi^{\perp}_A}{\psi^{\perp}_B},
	\end{equation}
	or,
	\begin{equation}
		\label{eq:schromid}
		\Delta A \Delta B \BraKet{\psi^{\perp}_A}{\psi^{\perp}_B} = \frac{1}{2}\Expect{\left \{ A,B \right \}} - \Expect{A}\Expect{B} + \frac{1}{2} \Expect{[A,B]}.
	\end{equation}	
	Now, because $A$ and $B$ are Hermitian, $\{A,B\}$ is Hermitian and $[A,B]$ is anti-Hermitian.  Therefore $\Expect{\{A,B\}}$ is real and $\Expect{[A,B]}$ is imaginary.  Further $\Expect{A}$, $\Expect{B}$, $\Delta A$ and $\Delta B$ are real.  Therefore, taking the modulus squared of \cref{eq:schromid} gives
	\begin{equation}
		(\Delta A)^2 (\Delta B)^2 \QProb{\psi^{\perp}_A}{\psi^{\perp}_B} = \Abs{\frac{1}{2}\Expect{\left \{ A,B \right \}} - \Expect{A}\Expect{B}}^2 + \Abs{\frac{1}{2} \Expect{[A,B]}}^2.		
	\end{equation}
	Finally, because $\Ket{\psi^{\perp}_A}$ and $\Ket{\psi^{\perp}_B}$ are unit vectors, we have $0 \leq \QProb{\psi^{\perp}_A}{\psi^{\perp}_B} \leq 1$, from which the result follows.
\end{proof}

The condition for equality in the Schr{\"o}dinger relation is $\QProb{\psi^{\perp}_A}{\psi^{\perp}_B} = \Abs{\text{corr}_{A,B}}^2 = 1$.  States that saturate the inequality are called \emph{(Schr{\"o}dinger) intelligent states}.  The condition $\Abs{\text{corr}_{A,B}}^2 = 1$ can be used to find intelligent states, although this is not easier than solving for equality in the Schr{\"o}dinger relation directly.

\section{The Textbook Proof and The Cauchy-Schwarz Inequality}

\label{Cauchy}

The textbook proofs of the Robertson and Schr{\"o}dinger uncertainty relations are based on the Cauchy-Schwarz inequality
\begin{equation}
	\QProb{f}{g} \leq \SBraKet{f}\SBraKet{g}.
\end{equation}
Note that the proofs given in \cref{Robertson} also make use of a special case of this inequality: that for unit vectors $\QProb{f}{g} \leq 1$.  This is applied to $\Ket{f} = \Ket{\psi^{\perp}_A}$, $\Ket{g} = \Ket{\psi^{\perp}_B}$.  My aim is not to eliminate any use of the Cauchy-Schwarz inequality, but just to argue that the proof is more memorable if the inequality is applied in a different way than in the standard proof.

In the standard proof, the Cauchy-Schwarz inequality is applied to the two vectors $\Ket{f} = \left (A - \Expect{A} \right )\Ket{\psi}$ and $\Ket{g} = \left (B - \Expect{B} \right )\Ket{\psi}$ to obtain
\begin{equation}
	\Abs{\Sand{\psi}{(A - \Expect{A})(B - \Expect{B})}{\psi}}^2 \leq \Sand{\psi}{(A - \Expect{A})^2}{\psi}\Sand{\psi}{(B- \Expect{B})^2}{\psi}.
\end{equation}
A few lines of messy algebra and cancellations, which I will spare you the details of, yields
\begin{equation}
	\left ( \Delta A \right )^2 \left ( \Delta B \right )^2 \geq \Abs{\frac{1}{2} \Expect{\left \{ A,B \right \}} - \Expect{A}\Expect{B} + \frac{1}{2} \Expect{\left [ A,B \right ]}}^2, 
\end{equation}
from which we can derive the Schr{\"o}dinger and Robertson relations by recognizing the real and imaginary parts of the right hand side.

As physics students do not often see the Cauchy-Schwarz inequality prior to their first course on quantum mechanics, most textbooks include a proof of this as well.  One of the common proofs uses reasoning similar to that which we used to establish the Aharonov-Vaidman identity.  It starts by recognizing that $\Ket{g}$ can be written as
\begin{equation}
	\label{eq:CSDeriv}
	\Ket{g} = \alpha \Ket{f} + \beta \Ket{f^{\perp}},
\end{equation}
where $\Ket{f^{\perp}}$ is a unit vector that is orthogonal to $\Ket{f}$.  To find $\alpha$, take the inner product of this with $\Ket{f}$, which yields $\alpha = \BraKet{f}{g}/\SBraKet{f}$.  Substituting this back into \cref{eq:CSDeriv} and then taking the inner product of $\Ket{g}$ with itself gives  
\begin{equation}
	\SBraKet{g} = \frac{\QProb{f}{g}}{\SBraKet{f}} + \Abs{\beta}^2.
\end{equation}
The Cauchy-Schwarz inequality follows from this by recognizing that $\Abs{\beta}^2$ is real and non-negative.

Summarizing, the standard proof of the Robertson inequality  consists of: proving the Cauchy-Schwarz inequality and then finding convenient vectors to insert into the inequality that will yield terms involving $\Delta A$ and $\Delta B$ after some algebra.  From the Aharonov-Vaidman identity, we can see that the reason the choice $\Ket{f} = \left (A - \Expect{A} \right )\Ket{\psi}$ and $\Ket{g} = \left (B - \Expect{B} \right )\Ket{\psi}$ is guaranteed work is that $\Ket{f} = \Delta A \Ket{\psi^{\perp}_A}$ and $\Ket{g} = \Delta B \Ket{\psi^{\perp}_B}$.

After inserting these choices, one has to multiply out and simplify the expressions in the Cauchy-Schwarz inequality.  This involves recognizing things like $\Expect{A}\Sand{\psi}{A}{\psi} = \Expect{A}^2$ and then canceling several terms.  It is difficult for students to follow the full details of this in a lecture.  In the approach using the Aharonov-Vaidman relation, we already have expressions involving $\Delta A$ and $\Delta B$, so it is easier to see how to get an expression involving $\Delta A \Delta B$.  This expression has fewer terms and there is less cancellation to do.

Although the approach using the Aharonov-Vaidman identity uses the Cauchy-Schwarz inequality in a less convoluted way, it uses similar mathematical ideas.  For vectors $\Ket{f}$ and $\Ket{g}$, we can write $\Ket{g}$ in terms of $\Ket{f}$ and an orthogonal vector, as in the proof of Cauchy-Schwarz, or we can write both vectors in terms of a third vector $\Ket{h}$ as
\begin{align}
	\Ket{f} & = \alpha_1 \Ket{h} + \beta_1 \Ket{h^{\perp}_f} \\
	\Ket{g} & = \alpha_2 \Ket{h} + \beta_2 \Ket{h^{\perp}_g}, 
\end{align}
where $\Ket{h^{\perp}_f}$ and $\Ket{h^{\perp}_g}$ are (generally different) vectors orthogonal to $\Ket{h}$ and $\alpha_1,\beta_1,\alpha_2,\beta_2$ are complex coefficients.  This is what we do in the proof of the Aharonov-Vaidman identity with the choices $\Ket{f} = A\Ket{\psi}$, $\Ket{g} = B\Ket{\psi}$ and $\Ket{h} = \Ket{\psi}$.  The advantage of this approach is that it immediately yields expressions involving the expectation values and standard deviations of the observables, which it is easy to see what to do with in order to get the uncertainty relations.  From this point of view, the standard proof looks like shoehorning something into the Cauchy-Schwarz inequality that will yield standard deviations, and then backtracking to a point more easily obtained from the Aharonov-Vaidman identity.  At the end of the day, both approaches use the same mathematics, but the Aharonov-Vaidman approach does so in a simpler and more direct way.

I would go so far as to say that whenever you are tempted to use the Cauchy-Schwarz inequality to prove a relationship between standard deviations of observables in quantum mechanics, you will have an easier time working from the Aharonov-Vaidman identity (and the special case $\QProb{f}{g} \leq 1$ of the Cauchy-Schwarz inequality for unit vectors) instead.  \Cref{Other} and \Cref{Prop} give more examples of this.

I end this section by showing that you can prove the Cauchy-Schwarz inequality from the Aharonov-Vaidman identity.  I include this not because I think it is the best way to prove the Cauchy-Schwarz inequality, but because finding alternative proofs of the Cauchy-Schwarz inequality is the mathematician's equivalent of the sport of finding new uncertainty relations in quantum mechanics.  It also shows that, in principle, there is nothing that can be proved using the Cauchy-Schwarz inequality that could not be proved using the Aharonov-Vaidman identity.  Of course, outside the context of standard deviations in quantum mechanics, using the Aharonov-Vaidman identity instead of the Cauchy-Schwarz inequality is unlikely to yield a better proof.

\begin{proposition}[Cauchy-Schwarz Inequality]
	\label{prop:CS}
	Let $\Ket{f}$ and $\Ket{g}$ be two vectors in a Hilbert space $\Hilb$.  Then
	\begin{equation}
		\Abs{\BraKet{f}{g}}^2 \leq \SBraKet{f} \SBraKet{g}.
	\end{equation}
\end{proposition}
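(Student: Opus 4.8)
The plan is to reduce the Cauchy-Schwarz inequality to a single fact that is already built into \cref{Prop:AV}: for every linear operator $A$ and every nonzero vector $\Ket{\psi}$, the quantity $(\Delta A)^2 = \Expect{A^{\dagger}A} - \Abs{\Expect{A}}^2$ is real and non-negative. This non-negativity is precisely what is needed for $\Ket{\psi^{\perp}_A}$ to have a consistent norm in the proof of \cref{Prop:AV}; equivalently, one sees it by taking the squared norm of $\Delta A \Ket{\psi^{\perp}_A} = A\Ket{\psi} - \Expect{A}\Ket{\psi}$ and using $\SBraKet{\psi^{\perp}_A} = \SBraKet{\psi}$. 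All that remains is to choose $A$ and $\Ket{\psi}$ cleverly enough that this built-in inequality becomes the one we want.

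First I would dispose of the trivial case $\Ket{g} = 0$, for which both sides of the claimed inequality vanish. Assuming $\Ket{g} \neq 0$, I would apply \cref{Prop:AV} with the (generally unnormalized) vector $\Ket{\psi} = \Ket{g}$ and the rank-one operator $A = \Outer{f}{g}$. A one-line computation then gives $\Sand{g}{A}{g} = \BraKet{g}{f}\SBraKet{g}$, hence $\Expect{A} = \BraKet{g}{f}$, and $A^{\dagger}A = \SBraKet{f}\,\Outer{g}{g}$, hence $\Expect{A^{\dagger}A} = \SBraKet{f}\SBraKet{g}$.

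Plugging these into $\Expect{A^{\dagger}A} - \Abs{\Expect{A}}^2 \geq 0$ yields $\SBraKet{f}\SBraKet{g} - \Abs{\BraKet{g}{f}}^2 \geq 0$, which is the Cauchy-Schwarz inequality once we note $\Abs{\BraKet{g}{f}} = \Abs{\BraKet{f}{g}}$. (Symmetrically, one could instead take $\Ket{\psi} = \Ket{f}$ and $A = \Outer{g}{f}$, interchanging the roles of the two vectors.) There is no genuine obstacle in this argument; the only point that requires care is the degenerate case $\Ket{g} = 0$, which must be excluded separately because $\Expect{A} = \Sand{\psi}{A}{\psi}/\SBraKet{\psi}$ involves division by $\SBraKet{\psi}$ and is undefined there. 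The ``work'' of Cauchy-Schwarz has simply been absorbed into the orthogonal decomposition underlying \cref{Prop:AV} — the very same decomposition the textbook proof of Cauchy-Schwarz uses — now invoked in packaged form.
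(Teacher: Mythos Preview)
Your proof is correct, but it takes a genuinely different route from the paper's.

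The paper applies the Aharonov--Vaidman identity to the \emph{Hermitian} operator $P = \Proj{g}$ acting on $\Ket{f}$, then takes inner products of the resulting identity with $\Ket{f^{\perp}_P}$ and with $\Ket{g}$ separately, combines the two through several lines of algebra, and eventually arrives at the exact identity
\[
\SBraKet{f}\SBraKet{g} = \QProb{f}{g} + \frac{(\Delta P)^2 \,\SBraKet{f}^2}{\QProb{f}{g}},
\]
from which Cauchy--Schwarz follows because the last term is non-negative. You instead apply the identity to the \emph{non-Hermitian} rank-one operator $A = \Outer{f}{g}$ acting on $\Ket{g}$, and the inequality $(\Delta A)^2 \geq 0$ is already Cauchy--Schwarz with no further work.

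Your approach is considerably shorter and makes transparent that the entire content of the inequality is the non-negativity of $(\Delta A)^2$ built into \cref{Prop:AV}. The paper's approach has the minor compensating advantage of producing an explicit formula for the slack $\SBraKet{f}\SBraKet{g} - \QProb{f}{g}$ and of staying within the Hermitian case---though, as you implicitly note, the paper has already established \cref{Prop:AV} in full generality, so there is no real cost to using a non-Hermitian $A$.
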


\begin{proof}
	First note that the inequality trivially holds whenever $\BraKet{f}{g} = 0$ and that $\SBraKet{f} = 0$ implies $\BraKet{f}{g} = 0$.  Therefore, we can assume that both $\BraKet{f}{g} \neq 0$ and $\SBraKet{f} > 0$. 
	
	Let $P = \Proj{g}$.  Note this is not necessarily a projector because $\Ket{g}$ does not have to be normalized, but it is a Hermitian operator.  Applying the Aharonov-Vaidman identity to $P$ and $\Ket{f}$ gives
	\begin{equation}
		P \Ket{f} = \Expect{P} \Ket{f} + \Delta P \Ket{f^{\perp}_{P}},
	\end{equation}
	or equivalently
	\begin{equation}
		\label{eq:AVProj}
		\Ket{g}\BraKet{g}{f} = \frac{\BraKet{f}{g}\BraKet{g}{f}}{\SBraKet{f}} \Ket{f} + \Delta P \Ket{f^{\perp}_P}.
	\end{equation}
	 Taking the inner product with $\Ket{f^{\perp}_P}$ gives
	 \begin{equation}
	 	\BraKet{f^{\perp}_P}{g}\BraKet{g}{f} = \Delta P \SBraKet{f},
	 \end{equation}
       where we used the fact that $\SBraKet{f^{\perp}_P} = \SBraKet{f}$
	 Rearranging and taking the complex conjugate gives
	 \begin{equation}
	 	\label{eq:PerpSubber}
	 	\BraKet{g}{f^{\perp}_P} = \frac{\Delta P\SBraKet{f}}{\BraKet{f}{g}}.
	 \end{equation}
	 
	 Now, taking the inner product of \cref{eq:AVProj} with $\Ket{g}$ gives
	 \begin{equation}
	 	\SBraKet{g}\BraKet{g}{f} = \frac{\BraKet{f}{g}\BraKet{g}{f}}{\BraKet{f}{f}} \BraKet{g}{f} + \Delta P \BraKet{g}{f^{\perp}_P}.
	 \end{equation}
	 Multiplying both sides by $\SBraKet{f}/\BraKet{g}{f}$ gives
	 \begin{equation}
	 	\SBraKet{f}\SBraKet{g} = \BraKet{f}{g}\BraKet{g}{f}  +  \frac{\Delta P \BraKet{g}{f^{\perp}_P} \SBraKet{f}}{\BraKet{g}{f}}.
	 \end{equation}
	 Substituting \cref{eq:PerpSubber} into this gives
	 \begin{equation}
	 	\SBraKet{f}\SBraKet{g} = \BraKet{f}{g}\BraKet{g}{f}  +  \frac{\left ( \Delta P \right )^2  \QProb{f}{f}}{\BraKet{f}{g}\BraKet{g}{f}},
	 \end{equation}
	 or
	 \begin{equation}
		\SBraKet{f}\SBraKet{g} = \QProb{f}{g}  +  \frac{\left ( \Delta P \right )^2  \QProb{f}{f}}{\QProb{f}{g}}.
	\end{equation}
	Now, the terms $\Delta P$, $\SBraKet{f}$ and $\Abs{\BraKet{f}{g}}$ are all real and non-negative.  Hence,
	 \begin{equation}
		\SBraKet{f}\SBraKet{g} \geq \QProb{f}{g}.
	\end{equation}	
\end{proof}

\section{Pedagogical Notes}

\label{Pedagogy}

In order to teach the Robertson uncertainty relation via the Aharonov-Vaidman identity, you first have to establish the Aharonov-Vaidman identity.  For the purposes of proving the Robertson uncertainty relation, it is sufficient to restrict the operator in the identity to be Hermitian and the vector $\Ket{\psi}$ to be a unit vector, as I shall in this section.  

In my experience, not all students immediately understand why, given a unit vector $\Ket{\psi}$, any other unit vector $\Ket{\phi}$ can be written as
\begin{equation}
	\label{eq:Convince}
	\Ket{\phi} = \alpha \Ket{\psi} + \beta \Ket{\psi^{\perp}},
\end{equation}
where $\Ket{\psi^{\perp}}$ is a unit vector orthogonal to $\Ket{\psi}$.  They will probably have seen Gram-Schmidt orthogonalization in a linear algebra class, but may have difficulty using that knowledge here due to the jump to abstract Hilbert spaces and Dirac notation.  To aid intuition, I remark that $\Ket{\psi}$ and $\Ket{\phi}$ span a two-dimensional subspace of $\Hilb$ and show them \cref{fig:Ortho}.
\begin{figure}[!htb]
	\centering
	\includegraphics[width=0.3\linewidth]{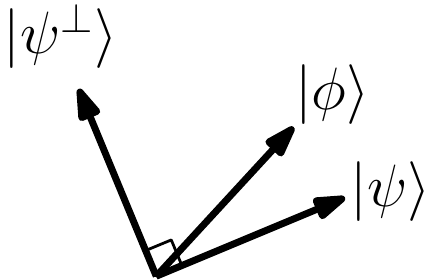}
	\caption{\label{fig:Ortho}Diagram showing that there exists a unit vector $\Ket{\psi^{\perp}}$ such that $\Ket{\psi}$ and $\Ket{\psi^{\perp}}$ form an orthogonal basis for the two dimensional subspace of $\Hilb$ spanned by $\Ket{\psi}$ and $\Ket{\phi}$.}
\end{figure}
By the process of Gram-Schmidt orthogonalization, we can construct an orthornormal basis for this subspace consisting of $\Ket{\psi}$ and
\begin{equation}
	\Ket{\psi^{\perp}} = \frac{1}{\sqrt{1 - \QProb{\phi}{\psi}}} \left ( \Ket{\phi} - \Ket{\psi}\BraKet{\psi}{\phi} \right ),
\end{equation} 
from which we have \cref{eq:Convince} with $\alpha = \BraKet{\psi}{\phi}$ and $\beta = \sqrt{1 - \QProb{\phi}{\psi}}$.

In my quantum mechanics classes, I set students in-class activities that involve things like deriving important equations or making order of magnitude estimates.  These take about 5-10 minutes each and are done in pairs.  I usually do two or three such activities per class.  I believe this increases active engagement and retention of the main principles.  I try to reduce the number of long derivations that I do myself on the board because I think they cause confusion about what the most important equations are and the derivations are rarely remembered by the students.  However, I also do not want to set the students a long and complicated derivation to do themselves in class, so I try to find shorter derivations that they can do with guidance instead.  The proof of the Robertson relation from the Aharonov-Vaidman relation is better suited to this approach than the standard proof.

After establishing \cref{eq:Convince}, I set students the following activity.

\vspace{1ex}
\noindent{\textbf{In Class Activity}}

\vspace{1ex}
\emph{Given that $A\Ket{\psi} = \alpha \Ket{\psi} + \beta \Ket{\psi^{\perp}}$, find $\alpha$ and $\beta$ in terms of the expectation value $\Expect{A}$ and standard deviation $\Delta A$ of $A$ in the state $\Ket{\psi}$.}

\vspace{1ex}
Although some students can do this straight away, most need some help.  During the course of the activity, I walk around the class to get an idea of how they are doing.  When it seems like many students are stuck, I reveal the following three hints in sequence.

\vspace{1ex}
\noindent{\textbf{Hints}}
\begin{enumerate}
\item \emph{Try taking the inner product of $A\Ket{\psi} = \alpha \Ket{\psi} + \beta \Ket{\psi^{\perp}}$ with other states.}

\item \emph{Try taking the inner product of $A\Ket{\psi}$ with $\Ket{\psi}$.}

\item \emph{Try taking the inner product of $A\Ket{\psi}$ with itself.}
\end{enumerate}

Although most students can get $\alpha = \Expect{A}$ either straight away or after the first hint, $|\beta| = \Delta A$ is more challenging.  After taking the inner product with $\Ket{\psi}$, the obvious instinct is to take the inner product with $\Ket{\psi^{\perp}}$, which does not help, so the third hint is usually needed.  After this, it is a short hop to the Robertson relation via the proof given in \cref{Robertson}.

I think it would be more difficult to teach the standard proof in this way.  One would either have to ask the students to derive the Cauchy-Schwarz inequality for themselves or derive the Robertson relation from Cauchy-Schwarz.  The former is a bit abstract for a quantum mechanics class and the latter involves a lot of algebra and cancellations with a high potential for making mistakes.  Both would require a large number of hints.  In contrast, the proof of the Aharonov-Vaidman identity is relatively short, and I think that students who retain the identity are more likely to be able to reconstruct the proof of the Robertson relation for themselves. 

\section{Other Uncertainty Relations for Standard Deviations}

\label{Other}

Despite the ubiquity of the Schr{\"o}dinger-Robertson uncertainty relations in quantum mechanics classes, there are good reasons to go beyond them.  For example, consider a spin-$1/2$ particle with spin operators $S_x$, $S_y$ and $S_z$.  For this case, the Robertson uncertainty is $\Delta S_x \Delta S_y \geq \hbar \Abs{\Expect{S_z}}$.  Let $\Ket{x+}$ be the spin-up state in the $x$ direction.  For this state we have $\Expect{S_z} = 0$, which is perfectly valid because $\Ket{x+}$ is an eigenstate of $S_x$ and hence $\Delta S_x = 0$.  However, because $[S_x,S_y] \neq 0$ there is necessarily some uncertainty in $S_y$ and in fact $\Delta S_y = \hbar / 2$.  The Schr{\"o}dinger relation also yields $\Delta S_x \Delta S_y \geq 0$.  So the Schr{\"o}dinger-Robertson relations do not capture all uncertainty trade-offs that necessarily exist in quantum mechanics.  

More generally, for bounded operators $A$ and $B$, any uncertainty relation of the form $\Delta A \Delta B \geq f \left ( A,B,\Ket{\psi} \right )$ for some function $f$ must necessarily have $f \left ( A,B,\Ket{\psi} \right ) = 0$ whenever $\Ket{\psi}$ is an eigenstate of $A$ or $B$.  For this reason, it makes sense to seek uncertainty relations that bound the sum of standard deviations $\Delta A + \Delta B$, the sum of variances $\left ( \Delta A\right )^2 + \left ( \Delta B \right )^2$, or more exotic combinations.  We shall discuss the Maccone-Pati relations, and some simple generalizations, in this section.

Uncertainty relations are classified as either \emph{state dependent} or \emph{state independent}, depending on whether the right hand side of the inequality depends on the state $\Ket{\psi}$.  For two observables $A$ and $B$, a state dependent uncertainty relation is of the form $f(\Delta A,\Delta B) \geq g(A, B, \Ket{\psi})$, where $f$ and $g$ are specified functions, whereas a state independent uncertainty relation would be of the form $f(\Delta A,\Delta B) \geq g(A, B)$, noting that $g$ is no longer allowed to depend on $\Ket{\psi}$.

On the face of it, a state dependent uncertainty relation is a strange idea, since, for any given normalized state $\Ket{\psi}$, we can always just calculate the uncertainties $\Delta A$ and $\Delta B$ and get the exact value of $f(\Delta A, \Delta B)$.  Therefore, bounds on uncertainty that apply to all states seem more useful.

However, a state dependent uncertainty relation can be a useful step in deriving a state independent one.  This can happen in two ways.  First, it may happen that, for a particular choice of the observables $A$ and $B$,  the function $g(A, B, \Ket{\psi})$ turns out not to depend on $\Ket{\psi}$.  For example, the Robertson relation  $\Delta A \Delta B \geq \frac{1}{2} \Abs{\Sand{\psi}{[A,B]}{\psi}}$ is state dependent, but if we choose $A = x$, $B=p$, then $\Abs{\Sand{\psi}{[A,B]}{\psi}} = 1$ and so we get the Heisenberg relation $\Delta x \Delta p \geq \frac{\hbar}{2}$, which is state independent.  Since the main point of proving the Robertson uncertainty relation in a quantum mechanics class is to give a rigorous derivation of the Heisenberg relation, its state dependence does no harm.  However, the utility of the Robertson relation for other classes of observable, such as spin components, is more questionable.  Despite the fact that I have asked students to compute it for states of a spin-$1/2$ particle as a homework problem, I do not think there is ever a need to do this in practice, as it is just as easy to calculate the exact uncertainties.

The second way of obtaining a state independent uncertainty relation from a state dependent one is to optimize, i.e.\ if $f(\Delta A,\Delta B) \geq g(A, B, \Ket{\psi})$ then\footnote{The minimum in \cref{eq:uncmin} may have to be replaced by an infimum, depending on the Hilbert space that the observables are defined on.}
\begin{equation}
\label{eq:uncmin}
f(\Delta A,\Delta B) \geq \min_{\Ket{\psi}} g(A, B, \Ket{\psi}).
\end{equation}

Of course, if $f(\Delta A,\Delta B) = \Delta A \Delta B$ and $A$ and $B$ are bounded operators then this leads to the trivial relation $\Delta A \Delta B \geq 0$ because we can choose $\Ket{\psi}$ to be an eigenstate of either $A$ or $B$.  However, for sums and more general combinations of observables, optimization can lead to a nontrivial relation.

Further, if we are considering a set of experiments that can only prepare a subset of the possible states, then we can get an uncertainty relation that applies to those states by optimizing over the subset.  An example might be experiments in which we can only prepare the system in a Gaussian state.  Although this does not yield a state independent uncertainty relation, it is more useful than a completely state dependent one, as it allows us to bound the possible uncertainties for a class of relevant states.

To summarize, state dependent uncertainty relations are a strange idea, and I am not sure whether they would ever have been considered had not Robertson introduced one as a way-point in proving the Heisenberg relation.  However, they can be useful in proving more generally applicable uncertainty relations.  The  relations that we discuss here are state dependent.

The remainder of this section is structured as follows.  In \cref{Other:Sum} we prove two propositions called the \emph{sum relations} that will be used repeatedly using the Aharonov-Vaidman identity.  In \cref{Other:SD}, we give an Aharonov-Vaidman based proof of the Maccone-Pati uncertainty relations, and in in \cref{Other:Gen} we give some simple generalizations.

\subsection{The Sum Relations}

\label{Other:Sum}

\begin{proposition}
	\label{prop:PrimSum}
	Let $A$ and $B$ be linear operators acting on \Hilb.  Then, for any $\Ket{\psi} \in \Hilb$,
	\[\Delta (A+B) \Ket{\psi^{\perp}_{A+B}} = \Delta A \Ket{\psi^{\perp}_A} + \Delta B \Ket{\psi^{\perp}_B}.\]
\end{proposition}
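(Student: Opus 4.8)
The plan is to simply add the Aharonov-Vaidman identities for $A$ and $B$ and compare the result with the identity for $A+B$. First I would write down three instances of \cref{eg:GenAV}, all applied to the same vector $\Ket{\psi}$: one for $A$, giving $A\Ket{\psi} = \Expect{A}\Ket{\psi} + \Delta A\Ket{\psi^{\perp}_A}$; the analogous one for $B$; and one for the operator $A+B$, giving $(A+B)\Ket{\psi} = \Expect{A+B}\Ket{\psi} + \Delta(A+B)\Ket{\psi^{\perp}_{A+B}}$.

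Next I would add the first two identities. The key (and essentially only) observation needed is that the expectation value is linear in the operator: $\Expect{A+B} = \Sand{\psi}{(A+B)}{\psi}/\SBraKet{\psi} = \Expect{A} + \Expect{B}$. Hence the sum of the $A$ and $B$ identities reads $(A+B)\Ket{\psi} = (\Expect{A}+\Expect{B})\Ket{\psi} + \Delta A\Ket{\psi^{\perp}_A} + \Delta B\Ket{\psi^{\perp}_B}$. Equating this with the identity for $A+B$ and cancelling the common term $\Expect{A+B}\Ket{\psi} = (\Expect{A}+\Expect{B})\Ket{\psi}$ from both sides immediately yields $\Delta(A+B)\Ket{\psi^{\perp}_{A+B}} = \Delta A\Ket{\psi^{\perp}_A} + \Delta B\Ket{\psi^{\perp}_B}$.

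There is no real obstacle here; the proof is a two-line manipulation. The only point worth flagging is a matter of interpretation: \cref{eg:GenAV} pins down $\Ket{\psi^{\perp}_A}$ only up to the phase fixed in its proof, so the proposition should be read as an identity that holds for a consistent choice of these vectors. This is automatic, since the right-hand side is by construction exactly $(A+B)\Ket{\psi} - \Expect{A+B}\Ket{\psi}$, which is precisely what the left-hand side must equal. It is also worth remarking, to forestall confusion, that $\Delta(A+B)$ need not equal $\Delta A + \Delta B$, so the proposition is genuinely a statement about the orthogonal vectors rather than about the scalar standard deviations — this is exactly why it will be a useful tool in the subsequent sum relations.
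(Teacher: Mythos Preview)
Your proof is correct and follows essentially the same approach as the paper: apply the Aharonov-Vaidman identity to $A$, $B$, and $A+B$ separately, use linearity of the expectation value, and subtract to cancel the $\Expect{A+B}\Ket{\psi}$ term. Your remark about the phase convention for $\Ket{\psi^{\perp}_A}$ is a fair clarification, though the paper simply treats the identity as fixing that vector.
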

\begin{proof}
	Apply the Aharonov-Vaidman identity to $A+B$ in two different ways.  The first way is
	\begin{align}
		(A + B) \Ket{\psi} & = \Expect{A + B} \Ket{\psi} + \Delta (A + B) \Ket{\psi^{\perp}_{A+B}} \nonumber \\
		& = \left ( \Expect{A} + \Expect{B} \right ) \Ket{\psi} + \Delta (A + B) \Ket{\psi^{\perp}_{A+B}}, \label{eq:SumWay1}
	\end{align}
	and the second is
	\begin{align}
		(A + B) \Ket{\psi} & = A\Ket{\psi} + B\Ket{\psi} \nonumber \\ 
		& = \left ( \Expect{A} + \Expect{B} \right ) \Ket{\psi} + \Delta A \Ket{\psi^{\perp_A}} + \Delta B \Ket{\psi^{\perp}_B}. \label{eq:SumWay2}
	\end{align}
	Subtracting \cref{eq:SumWay2} from \cref{eq:SumWay1} and rearranging gives the desired result. 
\end{proof}

The next proposition comes from \cite{Pati2007}.  Here, the proof relies on \cref{prop:PrimSum} and so is based on the Aharonov-Vaidman relation.  The original proof uses a different method and is a little more complicated.

\begin{proposition}[The Sum Relation]
	\label{prop:Sum}
	Let $A$ and $B$ be two linear operators acting on a Hilbert space $\Hilb$.  Then,
	\[\Delta (A + B) \leq \Delta A + \Delta B.\]
\end{proposition}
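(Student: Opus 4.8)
The plan is to read the result straight off \cref{prop:PrimSum} by taking norms. That proposition supplies the vector identity
\[\Delta (A+B)\Ket{\psi^{\perp}_{A+B}} = \Delta A \Ket{\psi^{\perp}_A} + \Delta B \Ket{\psi^{\perp}_B},\]
so the first step is simply to apply the Hilbert-space norm $\Norm{\cdot}$ to both sides. On the left, since $\Ket{\psi^{\perp}_{A+B}}$ has the same norm as $\Ket{\psi}$ and $\Delta(A+B)\geq 0$, the left-hand side equals $\Delta(A+B)\,\Norm{\Ket{\psi}}$.

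The second step is to bound the right-hand side using the triangle inequality for the norm, $\Norm{\Ket{u}+\Ket{v}} \leq \Norm{\Ket{u}} + \Norm{\Ket{v}}$, together with $\Norm{\Ket{\psi^{\perp}_A}} = \Norm{\Ket{\psi^{\perp}_B}} = \Norm{\Ket{\psi}}$ and the non-negativity of $\Delta A$ and $\Delta B$. This yields
\[\Delta(A+B)\,\Norm{\Ket{\psi}} \leq \left( \Delta A + \Delta B \right)\Norm{\Ket{\psi}},\]
and dividing by $\Norm{\Ket{\psi}}$ (the case $\Ket{\psi}=0$ being trivial; in any event it suffices to take $\Ket{\psi}$ a unit vector, since all three standard deviations depend only on the ray) gives the claim.

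I do not expect any real obstacle here: the only non-trivial ingredient is the triangle inequality for the norm, which is itself the standard corollary of the Cauchy-Schwarz inequality established in \cref{prop:CS}. If one prefers to be fully self-contained, one can instead expand $\Norm{\Delta A \Ket{\psi^{\perp}_A} + \Delta B \Ket{\psi^{\perp}_B}}^2$ and bound the cross term $2\,\Delta A\,\Delta B\,\mathrm{Re}\BraKet{\psi^{\perp}_A}{\psi^{\perp}_B}$ using $\mathrm{Re}\BraKet{\psi^{\perp}_A}{\psi^{\perp}_B} \leq 1$, which is the unit-vector special case of Cauchy-Schwarz already used throughout the paper; this gives $\Norm{\cdot}^2 \leq (\Delta A + \Delta B)^2$ directly. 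Either way the proof is essentially a one-liner once \cref{prop:PrimSum} is in hand, which is precisely the point being illustrated; I would also remark afterward that equality requires $\BraKet{\psi^{\perp}_A}{\psi^{\perp}_B}=1$, i.e.\ $\mathrm{Rcorr}_{A,B}=1$ with the imaginary part vanishing, connecting back to \cref{cor:cor}.
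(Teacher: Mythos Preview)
Your proof is correct, but it takes a slightly different route from the paper's. You take the norm of both sides of the identity from \cref{prop:PrimSum} and invoke the triangle inequality. The paper instead takes the inner product of that identity with $\Ket{\psi^{\perp}_{A+B}}$, obtaining the exact equation
\[
\Delta(A+B) = \Delta A\,\mathrm{Re}\BraKet{\psi^{\perp}_{A+B}}{\psi^{\perp}_A} + \Delta B\,\mathrm{Re}\BraKet{\psi^{\perp}_{A+B}}{\psi^{\perp}_B},
\]
and then bounds each real part by $1$. Your approach is arguably the more standard one-liner; the paper's fits its running theme of ``take an inner product with a well-chosen vector and use only the unit-vector case $\Abs{\BraKet{\cdot}{\cdot}}\leq 1$''. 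Both ultimately rest on the same unit-vector Cauchy--Schwarz bound, just packaged differently. Note also that the paper's equality condition is stated as $\mathrm{Rcorr}(A+B,A)=\mathrm{Rcorr}(A+B,B)=1$, which is equivalent to your $\BraKet{\psi^{\perp}_A}{\psi^{\perp}_B}=1$ (if $\Ket{\psi^{\perp}_A}=\Ket{\psi^{\perp}_B}$ then the identity forces $\Ket{\psi^{\perp}_{A+B}}$ to coincide with both).
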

\begin{proof}
	Let $\Ket{\psi}$ in \cref{prop:PrimSum} be a unit vector.  Then, starting from $\Delta (A+B) \Ket{\psi^{\perp}_{A+B}} = \Delta A \Ket{\psi^{\perp}_A} + \Delta B \Ket{\psi^{\perp}_B}$ and taking the inner product with $\Ket{\psi^{\perp}_{A+B}}$ gives
	\[\Delta (A + B) = \Delta A \BraKet{\psi^{\perp}_{A+B}}{\psi^{\perp}_A} + \Delta B \BraKet{\psi^{\perp_{A+B}}}{\psi^{\perp}_B}.\]
	The left hand side of this equation is a real number, so the right hand side must be too.  Therefore, we can take the real part of each term to give
	\[\Delta (A + B) = \Delta A \text{Re} \left ( \BraKet{\psi^{\perp}_{A+B}}{\psi^{\perp}_A} \right ) + \Delta B \text{Re} 
	\left ( \BraKet{\psi^{\perp_{A+B}}}{\psi^{\perp}_B} \right ),\]
	but the real part of an inner product between two unit vectors is $\leq 1$, so we have
	\[\Delta (A + B) \leq \Delta A + \Delta B.\]	
\end{proof}

From the proof, we see that the equality condition for the sum relation is
\[\text{Rcorr}(A + B, A) = \text{Rcorr}(A+B,B) = 1.\]

\begin{remark}
	For a set of linear operators $A_1,A_2,\cdots,A_n$ on a Hilbert space $\Hilb$, \Cref{prop:PrimSum} is easily generalized to
	\begin{equation}
		\label{eq:PrimSum}
		\Delta \left ( \sum_{j=1}^n A_j \right ) \Ket{\psi^{\perp}_{\sum_{j=1}^n A_j}} = \sum_{j=1}^n \Delta A_j \Ket{\psi_{A_j}^{\perp}},
	\end{equation}
	by applying the Aharonov-Vaidman identity to $\sum_{j=1}^n A_j$.  Similarly, \cref{prop:Sum} is easily generalized to
	\begin{equation}
		\label{eq:Sum}
		\Delta \left ( \sum_{j=1}^n A_j \right ) \leq \sum_{j=1}^n \Delta A_j.
	\end{equation}
	by taking the inner product of \cref{eq:PrimSum} with $\Ket{\psi^{\perp}_{\sum_{j=1}^n A_j}}$.  We will also refer to the generalization in \cref{eq:Sum} as the sum relation.
\end{remark}

\subsection{The Maccone-Pati Uncertainty Relations}

\label{Other:SD}

Between the time of Robertson's uncertainty relation and now, there has always been some literature on uncertainty relations for variances and standard deviations.  However, the field was reinvigorated in 2014, when Maccone and Pati \cite{Maccone2015} proved a pair of uncertainty relations for sums of variances, which always give a nontrivial bound, even in the case of an eigenstate of an observable.

Here, we give Aharonov-Vaidman based proofs of the Maccone-Pati relations\footnote{Although the Aharonov-Vaidman identity is used in \cite{Maccone2015}, it is not used in the proofs of the uncertainty relations.}.  

\begin{theorem}[The First Maccone-Pati Uncertainty Relation]
	\label{thm:MP1}
	Let $A$ and $B$ be Hermitian operators on a Hilbert space $\Hilb$ and let $\Ket{\psi} \in \Hilb$ be a unit vector.  Then,
	\begin{equation}
		\label{eq:MP1}
		\left ( \Delta A \right  )^2 + \left ( \Delta B \right  )^2 \geq \pm i \Expect{[A,B]} + \Abs{\Sand{\psi^{\perp}}{(A \mp i B)}{\psi}}^2,
	\end{equation}
	where $\Ket{\psi^\perp}$ is any unit vector orthogonal to $\Ket{\psi}$.
\end{theorem}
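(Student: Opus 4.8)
The plan is to apply the Aharonov-Vaidman machinery to the non-Hermitian operator $A \mp iB$ and then, crucially, to project the resulting relation onto an \emph{arbitrary} unit vector orthogonal to $\Ket{\psi}$ rather than onto the canonical vector $\Ket{\psi^{\perp}_{A \mp iB}}$ that the identity hands us. The slack in that choice is precisely what produces the adjustable term $\Abs{\Sand{\psi^{\perp}}{(A \mp iB)}{\psi}}^2$ on the right-hand side.

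First I would record two elementary facts from \cref{Prop:AV}: since $B$ is Hermitian, $\Delta(\mp iB) = \Delta B$, and one may fix the phase so that $\Ket{\psi^{\perp}_{\mp iB}} = \mp i \Ket{\psi^{\perp}_B}$. Feeding these into \cref{prop:PrimSum} applied to the pair $A$ and $\mp iB$ gives
\[\Delta(A \mp iB)\Ket{\psi^{\perp}_{A \mp iB}} = \Delta A \Ket{\psi^{\perp}_A} \mp i \Delta B \Ket{\psi^{\perp}_B}.\]
Taking the squared norm of both sides, and using \cref{eq:commexpect} to recognize $\Delta A \Delta B\bigl(\BraKet{\psi^{\perp}_A}{\psi^{\perp}_B} - \BraKet{\psi^{\perp}_B}{\psi^{\perp}_A}\bigr) = \Expect{[A,B]}$, collapses the cross terms into
\[\bigl(\Delta(A \mp iB)\bigr)^2 = (\Delta A)^2 + (\Delta B)^2 \mp i\Expect{[A,B]},\]
whose right-hand side is real because $[A,B]$ is anti-Hermitian.

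Next I would use that $\Delta A \Ket{\psi^{\perp}_A} = (A - \Expect{A})\Ket{\psi}$ and $\Delta B \Ket{\psi^{\perp}_B} = (B - \Expect{B})\Ket{\psi}$, so the vector on the left of the first displayed relation equals $(A \mp iB)\Ket{\psi} - \Expect{A \mp iB}\Ket{\psi}$, which is orthogonal to $\Ket{\psi}$. Taking the inner product with any unit vector $\Ket{\psi^{\perp}}$ orthogonal to $\Ket{\psi}$ annihilates the $\Expect{A \mp iB}\BraKet{\psi^{\perp}}{\psi}$ piece and leaves $\Delta(A \mp iB)\BraKet{\psi^{\perp}}{\psi^{\perp}_{A \mp iB}} = \Sand{\psi^{\perp}}{(A \mp iB)}{\psi}$. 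Taking the modulus squared and invoking the unit-vector special case of Cauchy-Schwarz, $\QProb{\psi^{\perp}}{\psi^{\perp}_{A \mp iB}} \leq 1$ — exactly the inequality used in the Robertson proof — gives $\bigl(\Delta(A \mp iB)\bigr)^2 \geq \Abs{\Sand{\psi^{\perp}}{(A \mp iB)}{\psi}}^2$. Substituting the identity for $\bigl(\Delta(A \mp iB)\bigr)^2$ from the previous paragraph and moving the commutator term across yields \cref{eq:MP1}.

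I do not anticipate a genuine obstacle: the content is bookkeeping, and the one thing that needs care is keeping the $\pm/\mp$ sign correlations and the phase choice for $\Ket{\psi^{\perp}_{\mp iB}}$ straight — one can cross-check the intermediate identity via $(A \mp iB)^{\dagger}(A \mp iB) = A^2 + B^2 \mp i[A,B]$, which recovers the same expression for $\bigl(\Delta(A \mp iB)\bigr)^2$ directly from \cref{Prop:AV}. The only non-mechanical decision is to project onto a free orthogonal unit vector instead of the canonical one; that is what upgrades the trivial consequence $(\Delta A)^2 + (\Delta B)^2 \geq \pm i\Expect{[A,B]}$ of the nonnegativity of $\Delta(A \mp iB)$ into the full Maccone-Pati bound.
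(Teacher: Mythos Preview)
Your proposal is correct and follows essentially the same route as the paper: apply the Aharonov-Vaidman identity to the non-Hermitian operator $A \mp iB$, take the inner product with an arbitrary orthogonal unit vector $\Ket{\psi^{\perp}}$, and invoke $\QProb{\psi^{\perp}}{\psi^{\perp}_{A \mp iB}} \leq 1$. The only cosmetic difference is that you obtain $\bigl(\Delta(A \mp iB)\bigr)^2$ via \cref{prop:PrimSum} and \cref{eq:commexpect}, whereas the paper expands it directly from the definition $\Expect{(A \pm iB)(A \mp iB)} - \Abs{\Expect{A \mp iB}}^2$ --- precisely the computation you already flag as a cross-check.
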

\begin{proof}
	We will prove $\left ( \Delta A \right  )^2 + \left ( \Delta B \right  )^2 \geq -i \Expect{[A,B]} + \Abs{\Sand{\psi^{\perp}}{(A + i B)}{\psi}}^2$ by applying the Aharonov-Vaidman identity to $(A + iB)$.  The proof of the other inequality follows by replacing $A + iB$ with $A - iB$.  Note that, even though $A$ and $B$ are Hermitian, $A + iB$ is not, so it is crucial that we previously generalized the Aharonov-Vaidman identity to arbitrary linear operators.
	
	Applying the Aharonov-Vaidman identity to $A+iB$ gives
	\[(A + iB) \Ket{\psi} = \left ( \Expect{A} + i \Expect{B}\right ) \Ket{\psi} + \Delta (A + iB) \Ket{\psi^{\perp}_{A+iB}}.\]
	Taking the inner product with any unit vector $\Ket{\psi^{\perp}}$ orthogonal to $\Ket{\psi}$ gives
	\[\Sand{\psi^{\perp}}{(A+iB)}{\psi} = \Delta (A + iB) \BraKet{\psi^{\perp}}{\psi^{\perp}_{A + iB}},\]
	and taking the modulus squared of this gives
	\[\Abs{\Sand{\psi^{\perp}}{(A+iB)}{\psi}}^2 = \left ( \Delta (A + iB) \right )^2 \QProb{\psi^{\perp}}{\psi^{\perp}_{A + iB}}.\]
	Now, $\Abs{\BraKet{\psi^{\perp}}{\psi^{\perp}_{A + iB}}} \leq 1$, so
	\[\left ( \Delta (A + iB) \right )^2 \geq \Abs{\Sand{\psi^{\perp}}{(A+iB)}{\psi}}^2.\]
	The result now follows by expanding $\left ( \Delta (A + iB) \right )^2$ as follows.
	\begin{align*}
		\left ( \Delta (A + iB) \right )^2 & = \Expect{(A - iB)(A + iB)} - \Expect{A - iB}\Expect{A + iB} \\
		& = \Expect{A^2} + \Expect{B^2} + i\Expect{[A,B]} - \Expect{A}^2 - \Expect{B}^2 \\
		& = \left ( \Delta A \right )^2 + \left ( \Delta B \right )^2 + i \Expect{[A,B]}.
	\end{align*}
\end{proof}

\begin{theorem}[The Second Maccone-Pati Uncertainty Relation]
	\label{thm:MP2}
	Let $A$ and $B$ be linear operators on a Hilbert space $\Hilb$ and let $\Ket{\psi} \in \Hilb$ be a unit vector.  Then,
	\begin{equation}
		\label{eq:MP2}
		\left ( \Delta A \right  )^2 + \left ( \Delta B \right  )^2 \geq \frac{1}{2} \Abs{\Sand{\psi^{\perp}_{A+B}}{(A+B)}{\psi}}^2.
	\end{equation}
\end{theorem}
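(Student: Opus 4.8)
The plan is to observe that the right-hand side of \cref{eq:MP2} is, up to the factor of one half, nothing but $\left(\Delta(A+B)\right)^2$, and then to bound this quantity using the sum relation (\cref{prop:Sum}) together with the elementary scalar inequality $2\,\Delta A\,\Delta B \leq (\Delta A)^2 + (\Delta B)^2$.

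First I would apply the Aharonov-Vaidman identity to the operator $A+B$ and the unit vector $\Ket{\psi}$, which gives
\[(A+B)\Ket{\psi} = \Expect{A+B}\Ket{\psi} + \Delta(A+B)\Ket{\psi^{\perp}_{A+B}}.\]
Taking the inner product of this with $\Ket{\psi^{\perp}_{A+B}}$ and using $\BraKet{\psi^{\perp}_{A+B}}{\psi} = 0$ together with $\SBraKet{\psi^{\perp}_{A+B}} = 1$ yields
\[\Sand{\psi^{\perp}_{A+B}}{(A+B)}{\psi} = \Delta(A+B),\]
and since $\Delta(A+B)$ is real and non-negative this means $\Abs{\Sand{\psi^{\perp}_{A+B}}{(A+B)}{\psi}}^2 = \left(\Delta(A+B)\right)^2$. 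So \cref{eq:MP2} is equivalent to the assertion $\left(\Delta(A+B)\right)^2 \leq 2\bigl((\Delta A)^2 + (\Delta B)^2\bigr)$.

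To finish, I would invoke the sum relation $\Delta(A+B) \leq \Delta A + \Delta B$ from \cref{prop:Sum}. Squaring both sides (both are non-negative) gives $\left(\Delta(A+B)\right)^2 \leq (\Delta A)^2 + 2\,\Delta A\,\Delta B + (\Delta B)^2$, and since $(\Delta A - \Delta B)^2 \geq 0$ forces $2\,\Delta A\,\Delta B \leq (\Delta A)^2 + (\Delta B)^2$, we obtain $\left(\Delta(A+B)\right)^2 \leq 2\bigl((\Delta A)^2 + (\Delta B)^2\bigr)$. Combining this with the first step and dividing by two gives \cref{eq:MP2}.

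There is essentially no serious obstacle here; the only point worth flagging is the mildly surprising collapse of the inner product $\Sand{\psi^{\perp}_{A+B}}{(A+B)}{\psi}$ to exactly $\Delta(A+B)$, after which the whole statement reduces to the sum relation plus a one-line inequality. An alternative route that avoids citing \cref{prop:Sum} is to expand $\left(\Delta(A+B)\right)^2 = (\Delta A)^2 + (\Delta B)^2 + 2\,\Delta A\,\Delta B\,\mathrm{Re}\BraKet{\psi^{\perp}_A}{\psi^{\perp}_B}$ using \cref{prop:PrimSum} and \cref{cor:cor}, then bound $\mathrm{Re}\BraKet{\psi^{\perp}_A}{\psi^{\perp}_B} \leq 1$ and use the same AM-GM step; this is merely the unpacked form of the argument above. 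For the equality analysis I would note that saturation requires both $\Delta A = \Delta B$ and $\Ket{\psi^{\perp}_A} = \Ket{\psi^{\perp}_B}$, the latter being the equality condition in the sum relation.
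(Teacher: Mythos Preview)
Your proof is correct and follows essentially the same route as the paper: apply the Aharonov--Vaidman identity to $A+B$, take the inner product with $\Ket{\psi^{\perp}_{A+B}}$ to identify $\Sand{\psi^{\perp}_{A+B}}{(A+B)}{\psi}=\Delta(A+B)$, invoke the sum relation, and finish with the scalar inequality (the paper phrases this last step as $x^2+y^2\geq\tfrac{1}{2}(x+y)^2$, which is just your $2xy\leq x^2+y^2$ rearranged).
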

\begin{proof}
	Applying the Aharonov-Vaidman identity to $A+B$ gives
	\[(A+B) \Ket{\psi} = \left ( \Expect{A} + \Expect{B} \right ) \Ket{\psi} + \Delta (A+B) \Ket{\psi^{\perp}_{A+B}}.\]
	Taking the inner product with $\Ket{\psi^{\perp}_{A+B}}$ gives
	\begin{align*}
		\Sand{\psi^{\perp}_{A+B}}{(A+B)}{\psi} & = \Delta(A + B) \\
		& \leq \Delta A + \Delta B,
	\end{align*}
	where the second line follows from the sum relation.
	
	We could stop here and regard $\Delta A + \Delta B \geq \Sand{\psi^{\perp}_{A+B}}{(A+B)}{\psi}$ as an uncertainty relation, but Maccone and Pati wanted a relation in terms of variances to compare to their first result.  To do this, we take the modulus squared of  both sides to obtain
	\[\left ( \Delta A + \Delta B\right )^2 \geq \Abs{\Sand{\psi^{\perp}_{A+B}}{(A + B)}{\psi}}^2.\]
	The result now follows from the real number inequality $x^2 + y^2 \geq \frac{1}{2}(x+y)^2$ with $x = \Delta A$ and $y = \Delta B$.  For completeness, this inequality is proved as follows.
	\begin{align*}
		&  0 \leq (x-y)^2 = x^2 + y^2 -2xy \\
		\Rightarrow \quad & x^2 + y^2 \geq 2xy \\
		\Rightarrow \quad & 2x^2 + 2y^2 \geq x^2 + y^2 + 2xy \\
		\Rightarrow \quad & 2x^2 +2 y^2 \geq (x+y)^2 \\
		\Rightarrow \quad & x^2 + y^2 \geq \frac{1}{2} (x+y)^2.
	\end{align*}
\end{proof}

\subsection{Generalizations}

\label{Other:Gen}

Generalizations of the Maccone-Pati Uncertainty relations can be obtained by applying the Aharonov-Vaidman identity to more general linear combinations $\alpha A + \beta B$, where $\alpha, \beta \in \mathbb{C}$.  This gives
\begin{equation}
	\label{eq:MPGenerator}
	(\alpha A + \beta B) \Ket{\psi} = \left ( \alpha \Expect{A} + \beta \Expect{B} \right ) \Ket{\psi} + \Delta (\alpha A + \beta B) \Ket{\psi^{\perp}_{\alpha A + \beta B}}.
\end{equation}
Applying the strategy we used to prove \cref{thm:MP1}, we can take the inner product of this with an arbitrary unit vector $\Ket{\psi^{\perp}}$ that is orthogonal to $\Ket{\psi}$, which gives
\[\Sand{\psi^{\perp}}{(\alpha A + \beta B)}{\psi} =  \Delta (\alpha A + \beta B) \BraKet{\psi^{\perp}}{\psi^{\perp}_{\alpha A + \beta B}}.\]
We can now take the modulus squared of this and recognize that  $0 \leq \QProb{\psi^{\perp}}{\psi^{\perp}_{\alpha A + \beta B}} \leq 1$ to obtain
\[\Abs{\Sand{\psi^{\perp}}{(\alpha A + \beta B)}{\psi}}^2 \leq \Delta (\alpha A + \beta B).\]
Next, we can expand $\Delta (\alpha A + \beta B)$ and rearrange to obtain
\begin{multline}
	\label{eq:GenMP1}
	\Abs{\alpha}^2 \left ( \Delta A \right )^2 + \Abs{\beta}^2 \left ( \Delta B \right )^2 \geq - \text{Re}(\alpha^* \beta) \left ( \Expect{\{A,B\}} -2 \Expect{A} \Expect{B} \right ) - i \text{Im} \left ( \alpha^* \beta \right ) \Expect{[A,B]} \\ + \Abs{\Sand{\psi^{\perp}}{(\alpha A + \beta B)}{\psi}}^2. 
\end{multline}

Substituting $\alpha = 1$, $\beta = i$ and $\alpha = 1$, $\beta = -i$ immediately yields the first Maccone-Pati Uncertainty Relation. 

Alternatively, we can apply the strategy used to prove \cref{thm:MP2}.  Starting from \cref{eq:MPGenerator}, we can take the inner product with $\Ket{\psi^{\perp}_{\alpha A + \beta B}}$ and rearrange to obtain
\[\Delta (\alpha A + \beta B) = \Sand{\psi^{\perp}_{\alpha A + \beta B}}{(\alpha A + \beta B)}{\psi}.\]
Using the sum relation, together with $\Delta (\alpha A) = |\alpha| \Delta A$ gives
\[|\alpha|\Delta A + |\beta| \Delta B \geq \Sand{\psi^{\perp}_{\alpha A + \beta B}}{(\alpha A + \beta B)}{\psi}.\]
Finally, squaring and using the inequality $x^2 + y^2 \geq \frac{1}{2}(x+y)^2$ gives
\begin{equation}
	\label{eq:GenMP2}
	|\alpha|^2 \left ( \Delta A \right )^2 + |\beta|^2 \left ( \Delta B \right )^2 \geq \frac{1}{2} \Abs{\Sand{\psi^{\perp}_{\alpha A + \beta B}}{(\alpha A + \beta B)}{\psi}}^2.
\end{equation}

The inequalities \cref{eq:GenMP1} and \cref{eq:GenMP2} are related to some of the generalizations of the Maccone-Pati uncertainty relations that have previously appeared in the literature  \cite{Bannur2015, Xiao2016}.  For example, \cref{eq:GenMP1} can be used to derive an uncertainty relation that has appeared in the literature under the name ``weighted uncertainty relation'' \cite{Xiao2016}.  To do so, we set $\alpha = \sqrt{\lambda}$, $\beta = \pm  i/\sqrt{\lambda}$ in \cref{eq:GenMP1}, where $\lambda > 0$.  This yields
\[
	\lambda \left ( \Delta A \right )^2 + \frac{1}{\lambda} \left ( \Delta B \right )^2 \geq  \pm i  \Expect{[A,B]}  + \frac{1}{\lambda} \Abs{\Sand{\psi^{\perp}}{( \lambda A \mp i B)}{\psi}}^2. 
\]
This is an uncertainty relation in its own right, but the relation in \cite{Xiao2016} comes from adding this to \cref{eq:MP1}, which yields
\[
	(1 + \lambda) \left ( \Delta A \right )^2 +\left ( 1 + \frac{1}{\lambda} \right ) \left ( \Delta B \right )^2 \geq  \pm 2 i  \Expect{[A,B]}   \Abs{\Sand{\psi^{\perp}_1}{( A \mp i B)}{\psi}}^2 + \frac{1}{\lambda} \Abs{\Sand{\psi^{\perp}_2}{( \lambda A \mp i B)}{\psi}}^2,
\]
where $\Ket{\psi_1^{\perp}}$ and $\Ket{\psi_2^{\perp}}$ are (possibly different) unit vectors that are orthogonal to $\Ket{\psi}$.

This is intended as a simple example of a generalization that is easily obtained from the Aharonov-Vaidman identity, but I expect many other uncertainty relations that are usually proved using the Cauchy-Schwarz inequality or the parallelogram law would also have simple Aharonov-Vaidman based proofs. 

\section{Quantum Propagation of Uncertainty}

\label{Prop}

In this section, we develop generalizations of the classical formulas for the propagation of uncertainty.  We start with the case of linear functions in \cref{Prop:Lin}, for which exact formulas are easy to obtain, before moving on to the general, possibly nonlinear, case  in \cref{Prop:Non}, for which we have to employ a Taylor series approximation. 

\subsection{Linear Functions}

\label{Prop:Lin}

We start with the simplest case: a sum of two observables.  Classically, if $A$ and $B$ are random variables then 
\begin{equation}
	\label{eq:classprop}
   \left [ \Delta(A + B) \right ]^2  = \left ( \Delta A \right )^2 + \left ( \Delta B \right )^2 + 2 \Delta A \Delta B \, \mathrm{corr}_{A,B}.
\end{equation}
Consider an experiment consisting of multiple runs.  On each run, the quantities $A$ and $B$ are measured.  These quantities are formalized as random variables because we assume that our experiments are subject to random statistical fluctuations, and that the ``true'' values that we are seeking are the means $\Expect{A}$ and $\Expect{B}$ of these random processes.  We then use the average values calculated from the data as estimates of $\Expect{A}$ and $\Expect{B}$, and the standard deviations as a measure of the error in our experiment.    If we are actually interested in the quantity $A+B$ then we would sum the averages to form our estimate of $\Expect{A+B}$, and we would use \cref{eq:classprop} to determine the error in our estimate of $\Expect{A+B}$.  Using \cref{eq:classprop} in this way is called the \emph{propagation of uncertainty} or \emph{propagation of error}.

If the random variables, $A$ and $B$ are independent, which would be the case if the randomness were due to independent statistical errors, then $\mathrm{corr}_{A,B} = 0$ and we would have
\[ \left [ \Delta(A + B) \right ]^2  = \left ( \Delta A \right )^2 + \left ( \Delta B \right )^2, \]
which is the formula for propagation of uncertainty that is most commonly used in practice.

We now want to generalize these formulas by replacing classical random variables with quantum observables.  The generalization of \cref{eq:classprop} is as follows.
\begin{theorem}
	\label{thm:LinProp}	
	Let $A$ and $B$ be Hermitian operators on a Hilbert space $\Hilb$.  Then,
	\begin{align}
		\label{eq:LinProp}
		\left [ \Delta(A + B) \right ]^2 & = \left ( \Delta A \right )^2 + \left ( \Delta B \right )^2 + 2 \Delta A \Delta B \, \mathrm{Rcorr}_{A,B} \\
		& = \left ( \Delta A \right )^2 + \left ( \Delta B \right )^2 + \Expect{\{A,B\}} - 2\Expect{A}\Expect{B}
	\end{align}
\end{theorem}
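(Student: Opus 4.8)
The plan is to derive the identity from \cref{prop:PrimSum}, the primitive sum relation, by taking squared norms, in the same spirit as the other Aharonov-Vaidman based arguments in this article. Here $\Ket{\psi}$ is taken to be a unit vector, so that $\Ket{\psi^{\perp}_A}$, $\Ket{\psi^{\perp}_B}$ and $\Ket{\psi^{\perp}_{A+B}}$ are all unit vectors, and $A+B$ is Hermitian, so $\Delta(A+B)$ is real and non-negative.

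First I would start from $\Delta(A+B)\Ket{\psi^{\perp}_{A+B}} = \Delta A \Ket{\psi^{\perp}_A} + \Delta B \Ket{\psi^{\perp}_B}$ and take the inner product of each side with itself. The left-hand side gives $\left[\Delta(A+B)\right]^2$ because $\Ket{\psi^{\perp}_{A+B}}$ is normalized, and expanding the right-hand side (using $\SBraKet{\psi^{\perp}_A} = \SBraKet{\psi^{\perp}_B} = 1$ and that $\Delta A, \Delta B$ are real) gives
\[
\left[\Delta(A+B)\right]^2 = \left(\Delta A\right)^2 + \left(\Delta B\right)^2 + \Delta A \Delta B\left(\BraKet{\psi^{\perp}_A}{\psi^{\perp}_B} + \BraKet{\psi^{\perp}_B}{\psi^{\perp}_A}\right).
\]
Since $\BraKet{\psi^{\perp}_B}{\psi^{\perp}_A}$ is the complex conjugate of $\BraKet{\psi^{\perp}_A}{\psi^{\perp}_B}$, the bracketed term equals $2\,\text{Re}\left(\BraKet{\psi^{\perp}_A}{\psi^{\perp}_B}\right)$, which by \cref{eq:Rcorr} is precisely $2\,\text{Rcorr}_{A,B}$. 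This establishes the first line of \cref{eq:LinProp}.

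For the second line I would substitute the explicit expression $\text{Rcorr}_{A,B} = \left(\frac{1}{2}\Expect{\{A,B\}} - \Expect{A}\Expect{B}\right)/(\Delta A \Delta B)$ from \cref{eq:Rcorr}, so that $2\Delta A \Delta B\,\text{Rcorr}_{A,B} = \Expect{\{A,B\}} - 2\Expect{A}\Expect{B}$. As a consistency check, the second line can also be obtained directly by expanding $\Expect{(A+B)^2} - \Expect{A+B}^2$ and collecting terms, which is the genuinely elementary route; I would mention this but present the Aharonov-Vaidman derivation as the main one to keep with the theme of the article.

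There is no real obstacle here; the computation is routine. The only points needing a little care are the normalization assumption on $\Ket{\psi}$ and the fact that an inner product plus its conjugate is twice its real part. A minor caveat is that if $\Delta A = 0$ or $\Delta B = 0$ the quotient form of $\text{Rcorr}_{A,B}$ is ill-defined, but then the corresponding term in \cref{prop:PrimSum} vanishes and the identity still holds --- exactly the convention issue already present in the classical formula \cref{eq:classprop}.
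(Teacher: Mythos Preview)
Your proof is correct and follows essentially the same route as the paper: start from \cref{prop:PrimSum}, take the inner product of each side with itself, and identify the cross term via \cref{eq:Rcorr}. Your extra remarks about normalization and the degenerate cases $\Delta A = 0$ or $\Delta B = 0$ are welcome but not part of the paper's own argument.
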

\begin{proof}
	\Cref{prop:PrimSum} implies that, for any unit vector $\Ket{\psi} \in \Hilb$,
	\[\Delta (A + B) \Ket{\psi_{A+B}^{\perp}} = \Delta A \Ket{\psi_A^{\perp}} + \Delta B \Ket{\psi_B^{\perp}}.\]
	Taking the inner product of this with itself gives
	\begin{align*}
		\left [ \Delta \left ( A + B \right )\right ]^2 & = \left ( \Delta A \right )^2 + \left ( \Delta B \right )^2 + \Delta A \Delta B \left ( \BraKet{\psi_A^{\perp}}{\psi_B^{\perp}} + \BraKet{\psi_B^{\perp}}{\psi_A^{\perp}} \right ) \\
		& = \left ( \Delta A \right )^2 + \left ( \Delta B \right )^2 + 2\Delta A \Delta B \, \mathrm{Re}\left ( \BraKet{\psi_A^{\perp}}{\psi_B^{\perp}} \right ).
	\end{align*}
	Applying \cref{eq:Rcorr} completes the proof.
\end{proof}

\begin{remark}
	For operators $A_1,A_2,\cdots,A_n$ and real numbers $\alpha_1,\alpha_2,\cdots,\alpha_n$, \cref{thm:LinProp} is easily generalized to
	\begin{align*}
		\left [ \Delta \left ( \sum_{j=1}^n \alpha_j A_j \right ) \right ]^2 & = \sum_{j=1}^n \alpha_j^2 \left ( \Delta A_j \right )^2 + \sum_{j \neq k} \alpha_j \alpha_k \Delta A_j \Delta A_k \, \mathrm{Rcorr}_{A_j,A_k} \\ 
		& = \sum_{j=1}^n \alpha_j^2 \left ( \Delta A_j \right )^2 + \sum_{j \neq k} \alpha_j \alpha_k \left ( \Expect{\{ A_j,A_k \}} - 2 \Expect{A_j} \Expect{A_k} \right ). 
	\end{align*}
\end{remark}

Although \cref{thm:LinProp} is a true theorem about quantum observables, it cannot be used to propagate uncertainty in the same way as its classical counterpart.  Classically, we can measure $A$ and $B$ together in the same run of the experiment.  We can then estimate $A+B$ by summing the average values of $A$ and $B$ that we found in the experiment.  We also have all the information we need to calculate the uncertainty $\Delta (A+B)$, i.e.\ $\Delta A$, $\Delta B$, $\Expect{A}$, $\Expect{B}$ and $\Expect{AB}$, so we can determine the uncertainty without doing any more experiments.  

In quantum mechanics, this is not the case.  When $A$ and $B$ do not commute, they cannot both be accurately measured on the same run of an experiment.  We can still estimate their expectation values by measuring $A$ on half of the runs of the experiment and $B$ on the other half and taking averages.  Since $\Expect{A + B} = \Expect{A} + \Expect{B}$, summing these averages is still a way of estimating $\Expect{A+B}$.  However, we do not have enough information to calculate $\Delta (A+B)$.  The reason is that $\Delta (A+B)$ is the uncertainty in a \emph{direct} measurement of $A + B$.  Since $A$ and $B$ do not commute, this requires a different experimental setup from a measurement of $A$ and $B$ alone.

If we wanted to use \cref{eq:LinProp} to calculate $\Delta (A+B)$, we would also have to estimate $\Expect{\{ A,B\}}$.  The most straightforward way of doing this would be to measure the observable $\{A,B\} = AB + BA$, but this requires yet another different experimental setup, and one that is likely to be at least as complicated as measuring $A+B$ directly.

An exception to this are cases where $\{A,B\} = c I$ for some constant $c$, in which case $\Expect{\{ A,B\}} = c$ regardless of the state.  In particular, this is true of the Pauli observables $\sigma_x$, $\sigma_y$, $\sigma_z$ of a qubit for which $\{\sigma_j,\sigma_k\} = \delta_{jk} I$, where $j$ and $k$ run over $x,y,z$.  Therefore, if we measure $\sigma_x$ on many qubits prepared in the same way and $\sigma_y$ on another set of such qubits, we can estimate $\Expect{\sigma_x + \sigma_y}$ and $\Delta (\sigma_x + \sigma_y)$ without doing any further experiments using the formula
\[\left [ \Delta \left ( \sigma_x + \sigma_y \right ) \right ]^2= \left ( \Delta \sigma_x \right )^2 + \left ( \Delta \sigma_y \right )^2  - 2\Expect{\sigma_x}\Expect{\sigma_y}.\]

When $\{A,B\} \neq c I$, I do not know of any situations in which \cref{eq:LinProp} would be useful in practice, but from a theoretical point of view it is the appropriate generalization of \cref{eq:classprop} to quantum mechanics, and this bolsters the case that $\text{Rcorr}_{A,B}$ is the appropriate quantum generalization of the classical correlation. 

\subsection{Nonlinear Functions}

\label{Prop:Non}

For nonlinear functions $f(A,B)$ of two random variables $A$ and $B$, it is common to use a first order Taylor expansion of $f(A,B)$ about the point $f(\Expect{A},\Expect{B})$ to derive an approximation for the variance $[\Delta f(A,B)]^2$ to second order in $\Delta A$ and $\Delta B$.  This yields the formula
\begin{align*}
	\left [ \Delta f(A,B) \right ]^2 & \approx \left ( \left . \frac{\partial f}{\partial A} \right \vert_{A = \Expect{A},B=\Expect{B}} \right )^2 \left ( \Delta A \right )^2  + \left ( \left . \frac{\partial f}{\partial B} \right \vert_{A=\Expect{A}, B = \Expect{B}} \right )^2  \left ( \Delta B \right )^2 \\ & \qquad\qquad\qquad + \left . \frac{\partial f}{\partial A} \right \vert_{A = \Expect{A},B=\Expect{B}} \left . \frac{\partial f}{\partial B} \right \vert_{A=\Expect{A},B = \Expect{B}} \Delta A \Delta B \, \mathrm{corr}_{A,B}.
\end{align*}
To avoid cluttering notation, I will write $\bar{A}$ for $A= \Expect{A}$, so that we can more compactly write
\begin{equation}
	\label{eq:genclassprop}
	\left [ \Delta f(A,B) \right ]^2 \approx \left . \frac{\partial f}{\partial A} \right \vert_{\bar{A},\bar{B}}^2 \left ( \Delta A \right )^2  + \left . \frac{\partial f}{\partial B} \right \vert_{\bar{A}, \bar{B}}^2  \left ( \Delta B \right )^2  + \left . \frac{\partial f}{\partial A} \right \vert_{\bar{A},\bar{B}} \left . \frac{\partial f}{\partial B} \right \vert_{\bar{A},\bar{B}} \Delta A \Delta B \, \mathrm{corr}_{A,B}. 
\end{equation}
When $A$ and $B$ are independent, this reduces to
\[ \left [ \Delta f(A,B) \right ]^2  \approx  \left . \frac{\partial f}{\partial A} \right \vert_{\bar{A},\bar{B}}^2 \left ( \Delta A \right )^2  + \left . \frac{\partial f}{\partial B} \right \vert_{\bar{A},\bar{B}}^2  \left ( \Delta B \right )^2,\]
which is the most commonly used form.

The quantum generalization of \cref{eq:genclassprop} is as follows.

\begin{theorem}
	\label{thm:GenProp}
	Let $A$ and $B$ be Hermitian operators on a Hilbert space $\Hilb$ and consider a function $f: \mathfrak{H}(\Hilb) \times \mathfrak{H}(\Hilb) \rightarrow \mathfrak{H}(\Hilb)$ where $\mathfrak{H}(\Hilb)$ is the space of Hermitian operators on $\Hilb$.  Then
	\begin{equation}
	 	\label{eq:GenProp}
		 \left [ \Delta f(A,B) \right ]^2  \approx  \left . \frac{\partial f}{\partial A} \right \vert_{\bar{A},\bar{B}}^2 \left ( \Delta A \right )^2  + \left . \frac{\partial f}{\partial B} \right \vert_{\bar{A}, \bar{B}}^2  \left ( \Delta B \right )^2  + \left . \frac{\partial f}{\partial A} \right \vert_{\bar{A},\bar{B}} \left . \frac{\partial f}{\partial B} \right \vert_{\bar{A},\bar{B}} \Delta A \Delta B \, \mathrm{Rcorr}_{A,B}
	\end{equation}
	where $\approx$ means equality to second order in $\Delta A$ and $\Delta B$
\end{theorem}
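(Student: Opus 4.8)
The plan is to mimic the classical derivation of \cref{eq:genclassprop}: expand $f(A,B)$ to first order about the ``mean point'' and then invoke the linear result \cref{thm:LinProp}. The only genuinely quantum ingredient is that, just as in the proof of \cref{thm:LinProp}, the first-order deviation operators get repackaged through the Aharonov-Vaidman identity, so that the cross term produces $\mathrm{Rcorr}_{A,B}$ rather than the classical $\mathrm{corr}_{A,B}$.

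First I would Taylor expand $f$ about the point where $A$ and $B$ are replaced by the scalars $\bar A = \Expect{A}$ and $\bar B = \Expect{B}$ (times the identity). Writing $\delta A = A - \Expect{A}I$ and $\delta B = B - \Expect{B}I$ for the Hermitian deviation operators, and abbreviating $f_A = \left.\frac{\partial f}{\partial A}\right|_{\bar A,\bar B}$ and $f_B = \left.\frac{\partial f}{\partial B}\right|_{\bar A,\bar B}$ (ordinary partial derivatives of the scalar function $f(a,b)$, hence real numbers), the expansion reads
\[ f(A,B) = f(\bar A,\bar B)I + f_A\,\delta A + f_B\,\delta B + R, \]
where $R$ gathers all contributions of second and higher order in $\delta A$ and $\delta B$. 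Discarding $R$ is exactly the content of the symbol $\approx$ in the statement.

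Next I would note that the first-order truncation $\tilde f := f(\bar A,\bar B)I + f_A\,\delta A + f_B\,\delta B$ is Hermitian (the coefficients are real) and affine in $A$ and $B$; since adding a multiple of $I$ leaves a standard deviation unchanged, $\Delta\tilde f = \Delta(f_A A + f_B B)$. Applying the multi-operator form of \cref{thm:LinProp} from the remark that follows it, with the two operators $A,B$ and real coefficients $f_A,f_B$, gives
\[ \big[\Delta(f_A A + f_B B)\big]^2 = f_A^2(\Delta A)^2 + f_B^2(\Delta B)^2 + 2 f_A f_B\,\Delta A\,\Delta B\,\mathrm{Rcorr}_{A,B}, \]
which is precisely the right-hand side of \cref{eq:GenProp}. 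Equivalently one can rerun the short argument directly: the Aharonov-Vaidman identity gives $\delta A\Ket{\psi} = \Delta A\Ket{\psi^{\perp}_A}$ and $\delta B\Ket{\psi} = \Delta B\Ket{\psi^{\perp}_B}$, so $(\tilde f - \Expect{\tilde f}I)\Ket{\psi} = f_A\Delta A\Ket{\psi^{\perp}_A} + f_B\Delta B\Ket{\psi^{\perp}_B}$, and taking the squared norm together with \cref{eq:Rcorr} reproduces the same formula.

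It then remains to check that replacing $f(A,B)$ by $\tilde f$ perturbs $[\Delta f(A,B)]^2$ only at third order in $\Delta A$ and $\Delta B$, which is standard bookkeeping: grouping $f(A,B) = f_0 + f_1 + f_{\geq 2}$ by order in the deviations, one has $\Expect{f_1} = 0$ and $\Expect{f_{\geq 2}} = O\big((\Delta A + \Delta B)^2\big)$, so $f(A,B) - \Expect{f(A,B)}I = f_1 + O\big((\Delta A+\Delta B)^2\big)$ and hence $[\Delta f(A,B)]^2 = \Expect{f_1^2} + O\big((\Delta A+\Delta B)^3\big) = [\Delta\tilde f]^2 + O\big((\Delta A+\Delta B)^3\big)$. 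The hard part is not any of these computations but rather pinning down what a Taylor expansion of an operator-valued function of non-commuting operator arguments even means, so that $f_A$, $f_B$ and the remainder $R$ are unambiguous. For the pedagogical scope of the paper it is enough to take $f$ to be defined by a power series in $A$ and $B$ with a fixed operator ordering; then the ordering ambiguities live entirely inside $R$, affect only terms of order $(\Delta A+\Delta B)^2$ and higher, and are therefore invisible at the claimed second-order accuracy.
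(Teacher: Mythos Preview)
Your proof is correct and follows essentially the same route as the paper: Taylor expand to first order, then compute the variance of the resulting linear combination via the Aharonov--Vaidman identity (you cite the packaged result \cref{thm:LinProp} or rerun its derivation, while the paper applies \cref{prop:PrimSum} and takes the self inner product---the same computation), with your extra remainder bookkeeping and operator-ordering caveat being welcome rigor absent from the paper. One small discrepancy worth flagging: your cross term correctly carries a factor of $2$, whereas \cref{eq:GenProp} as stated (and the displayed line in the paper's own proof) omits it---comparison with \cref{thm:LinProp} shows this is a typo in the paper, not an error in your derivation.
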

\begin{proof}
	Consider the first order Taylor expansion of $f(A,B)$ about the point $f_0 = f(\Expect{A},\Expect{B})$,
	\[f(A,B) \approx f_0 + \left . \frac{\partial f}{\partial A} \right \vert_{\bar{A},\bar{B}} A + \left . \frac{\partial f}{\partial B} \right \vert_{\bar{A},\bar{B}}  B.\]
	Applying \cref{prop:PrimSum} to this gives
	\[\left [ \Delta f(A,B) \right ] \Ket{\psi^{\perp}_{f(A,B)}} \approx \left . \frac{\partial f}{\partial A} \right \vert_{\bar{A},\bar{B}} \Delta A \Ket{\psi^{\perp}_A} + \left . \frac{\partial f}{\partial B} \right \vert_{\bar{A},\bar{B} } \Delta B \Ket{\psi^{\perp}_B}.\]
	Taking the inner product of this with itself gives
	\begin{align*}
		\left [ \Delta f(A,B) \right ]^2 & \approx  \left . \frac{\partial f}{\partial A} \right \vert_{\bar{A},\bar{B}}^2 \left ( \Delta A \right )^2 +  \left . \frac{\partial f}{\partial B} \right \vert_{\bar{A}, \bar{B}}^2 \left ( \Delta B \right )^2 + \left . \frac{\partial f}{\partial A} \right \vert_{\bar{A},\bar{B}} \left . \frac{\partial f}{\partial B} \right \vert_{\bar{A}, \bar{B}} \Delta A \Delta B \, \mathrm{Re} \left ( \BraKet{\psi_A^{\perp}}{\psi_B^{\perp}} \right ) \\
& = \left . \frac{\partial f}{\partial A} \right \vert_{\bar{A},\bar{B}}^2 \left ( \Delta A \right )^2 +  \left . \frac{\partial f}{\partial B} \right \vert_{\bar{A}, \bar{B}}^2 \left ( \Delta B \right )^2 + \left . \frac{\partial f}{\partial A} \right \vert_{\bar{A},\bar{B}} \left . \frac{\partial f}{\partial B} \right \vert_{\bar{A}, \bar{B}} \Delta A \Delta B \, \mathrm{Rcorr}_{A,B}	 
	\end{align*}
\end{proof}

\begin{remark}
	For operators $A_1,A_2,\cdots,A_n$ and a function $f(A_1,A_2,\cdots,A_n)$, \cref{thm:GenProp} is easily generalized to
	\[
		\left [ \Delta f \left ( A_1,A_2,\cdots ,A_n \right ) \right ]^2 \approx \sum_{j=1}^n \left. \frac{\partial f}{\partial A_j} \right \vert_{\bar{A}}^2 \left ( \Delta A_j \right )^2  + \sum_{j \neq k} \left. \frac{\partial f}{\partial A_j} \right \vert_{\bar{A}} \left. \frac{\partial f}{\partial A_k} \right \vert_{\bar{A}} \Delta A_j \Delta A_k \, \mathrm{Rcorr}_{A_j,A_k},
	\]
       where $\bar{A}$ is shorthand for $A_1 = \Expect{A_1}, A_2 = \Expect{A_2},\cdots A_n = \Expect{A_n}$.
\end{remark}

As a formula for propagating uncertainty, \cref{eq:GenProp} inherits all of the problems of \cref{eq:LinProp}, but the problems are compounded further by use of the first order Taylor approximation.  This approximation is valid when $\Delta A$ and $\Delta B$ are suitably small compared to $\Expect{A}$, $\Expect{B}$, $f(\Expect{A},\Expect{B})$ and the derivatives of $f(A,B)$ at $A=\Expect{A}$, $B=\Expect{B}$.  This is often the case in classical experiments where everything can be measured with a small statistical error.  However, in quantum mechanics, when $A$ and $B$ do not commute, the (various) uncertainty relations tell us that there is necessarily a trade-off between the size of $\Delta A$ and $\Delta B$.  If one of them is small, then the other might necessarily have to be large.  For example, for the Pauli observables $\sigma_x$ and $\sigma_y$, at least one of the uncertainties must be comparable in size to $1$, which is the largest possible value of $\Expect{\sigma_x}$ or $\Expect{\sigma_y}$.

A case where the formula will work well is for a continuous variable system where $\Delta x \sim \Delta p \sim \sqrt{\hbar}$, and $\Expect{x}$, $\Expect{p}$ are large compared to $\sqrt{\hbar}$.  But this is a case where you would expect classical physics to be a good approximation anyway.  

I do not know whether there is a practical use of \cref{eq:GenProp}, but it is nonetheless a correct formal generalization of \cref{eq:genclassprop}.

\section{Dealing with Mixed States}

\label{Mixed}

So far, we have dealt exclusively with the case of pure state vectors $\Ket{\psi}$.  However, all of our results can be extended to more general density operators $\rho$, which can represent mixed states.  The most familiar way to do this is to make use of the concept of a \emph{purification} of a density operator.  Given a density operator on a Hilbert space $\Hilb[S]$, where $S$ stands for ``system'', we can always find a pure state vector $\Ket{\psi}_{SE} \in \Hilb[S]\otimes \Hilb[E]$, where $E$ is the ``environment'', such that
\[\rho_S = \Tr[E]{\Proj{\psi}_{SE}},\]
and $\mathrm{Tr}_E$ is the partial trace over $\Hilb[E]$.  You can then apply the Aharonov-Vaidman identity to operators of the form $A_S \otimes I_E$ acting on a purification to obtain results about the density operator $\rho_S$.

However, to make the parallels to the pure state case as close as possible, I prefer to use an equivalent concept, called an \emph{amplitude operator}.  The equivalence between amplitude operators and purifications is discussed in \cref{App:Purify}
\begin{definition}
\label{def:Amplitude}
Given a density operator $\rho_S$ on a Hilbert space $\Hilb[S]$, an \emph{amplitude operator} for $\rho_S$ is a linear operator $L_S : \Hilb[E] \rightarrow \Hilb[S]$, where $\Hilb[E]$ is any Hilbert space, such that
\[\rho_S = L_S L_S^{\dagger}.\]
\end{definition}

The reason for the name \emph{amplitude operator} is that, in pure-state quantum mechanics, an amplitude is a complex number $\alpha$ such that $|\alpha|^2$ is a probability.  A density operator is a non-commutative generalization of a probability distribution \cite{Redei2007, Leifer2013}, and hence an amplitude operator ought to be an operator that ``squares'' to a density operator.

Given a density operator $\rho_S$, one obvious way of constructing an amplitude operator is to set $\Hilb[E]=\Hilb[S]$ and $L_S = \sqrt{\rho}_S$, but there are an infinite number of alternatives, as the following proposition shows
\begin{proposition}
	An operator $L_S:\Hilb[E]\rightarrow\Hilb[S]$ is an amplitude operator for $\rho_S$ if and only if
	\[L_S = \sqrt{\rho}_S U_{S|E},\]
	where $U_{S|E}:\Hilb[E] \rightarrow \Hilb[S]$ is a semi-unitary operator, i.e.\ it satisfies $U_{S|E}U_{S|E}^{\dagger} = I_S$
\end{proposition}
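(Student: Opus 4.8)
The proof splits into two implications. The \emph{if} direction is immediate: if $L_S = \sqrt{\rho}_S U_{S|E}$ with $U_{S|E}U_{S|E}^{\dagger} = I_S$, then, using that $\sqrt{\rho}_S$ is Hermitian and squares to $\rho_S$,
\[
L_S L_S^{\dagger} = \sqrt{\rho}_S U_{S|E} U_{S|E}^{\dagger} \sqrt{\rho}_S = \sqrt{\rho}_S \sqrt{\rho}_S = \rho_S,
\]
so $L_S$ is an amplitude operator for $\rho_S$. The content is in the \emph{only if} direction, which is the operator analogue of the elementary fact --- the same one responsible for the phase freedom in the Aharonov-Vaidman identity --- that two quantities of equal modulus differ only by a ``phase'', the role of the phase now being played by a semi-unitary.

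For the \emph{only if} direction, the plan is to diagonalise $\rho_S$. Write $\rho_S = \sum_{i=1}^{r} p_i \Proj{i}$ with each $p_i > 0$ and $\{\Ket{i}\}_{i=1}^{r}$ orthonormal in $\Hilb[S]$, so that $\sqrt{\rho}_S = \sum_{i=1}^{r}\sqrt{p_i}\,\Proj{i}$ and the range of $\rho_S$ is the span of the $\Ket{i}$. From $\rho_S = L_S L_S^{\dagger}$ one sees that the range of $L_S$ lies inside the range of $\rho_S$, so $L_S = \sum_{i=1}^{r}\Outer{i}{\phi_i}$ with $\Ket{\phi_i} = L_S^{\dagger}\Ket{i} \in \Hilb[E]$. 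Expanding $L_S L_S^{\dagger} = \rho_S$ then forces $\BraKet{\phi_i}{\phi_j} = p_i\delta_{ij}$, so the vectors $\Ket{\chi_i} = \Ket{\phi_i}/\sqrt{p_i}$ are orthonormal in $\Hilb[E]$. I would extend $\{\Ket{\chi_i}\}_{i=1}^{r}$ to an orthonormal basis $\{\Ket{\chi_i}\}$ of $\Hilb[E]$ and $\{\Ket{i}\}_{i=1}^{r}$ to an orthonormal basis $\{\Ket{i}\}$ of $\Hilb[S]$, and set $U_{S|E} = \sum_{i}\Outer{i}{\chi_i}$, the sum running over the chosen basis of $\Hilb[S]$. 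Then $U_{S|E}U_{S|E}^{\dagger} = \sum_i \Proj{i} = I_S$, while a direct multiplication gives $\sqrt{\rho}_S U_{S|E} = \sum_{i=1}^{r}\sqrt{p_i}\,\Outer{i}{\chi_i} = \sum_{i=1}^{r}\Outer{i}{\phi_i} = L_S$, as required. An equivalent route is the polar decomposition $L_S = \sqrt{L_S L_S^{\dagger}}\,V = \sqrt{\rho}_S V$ with $V$ a partial isometry, followed by enlarging $V$ to a co-isometry by mapping extra basis vectors of $\Hilb[E]$ (automatically orthogonal to the range of $L_S^{\dagger}$) isometrically onto the orthogonal complement of the range of $\rho_S$; this is harmless because it only adds components whose range lies in $\ker\sqrt{\rho}_S$.

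The main obstacle is precisely this patching step, and it is a constraint on dimensions rather than a calculation. When $\rho_S$ is not of full rank, the ``obvious'' candidate --- the partial isometry $V$ from the polar decomposition, or $(\sqrt{\rho}_S)^{+}L_S$ built from the Moore--Penrose pseudoinverse --- satisfies only $U_{S|E}U_{S|E}^{\dagger} = \Pi_S$, the projector onto the range of $\rho_S$, not $I_S$. Promoting it to a genuine semi-unitary requires that $\Hilb[E]$ contain enough directions orthogonal to the range of $L_S^{\dagger}$, i.e.\ that $\dim\Hilb[E] \geq \dim\Hilb[S]$; I would either carry this as a standing hypothesis (consistent with the treatment of purifications in \cref{App:Purify}) or note that one may take $\Hilb[E]$ large enough from the outset. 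Everything else --- verifying $\BraKet{\phi_i}{\phi_j} = p_i\delta_{ij}$ and evaluating the two operator products --- is routine bookkeeping.
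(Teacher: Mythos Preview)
Your argument is correct and, at bottom, the same as the paper's: the paper simply invokes the (non-square) polar decomposition $L_S = P_S U_{S|E}$ with $U_{S|E}$ semi-unitary and then reads off $P_S = \sqrt{\rho}_S$ from $L_S L_S^{\dagger} = \rho_S$. Your spectral construction is an explicit unpacking of that polar decomposition; the ``equivalent route'' you sketch at the end \emph{is} the paper's proof.

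Where you go beyond the paper is in flagging the dimensional obstruction. The paper asserts that ``like any operator'' $L_S$ has a polar decomposition with $U_{S|E}U_{S|E}^{\dagger}=I_S$, citing a reference for the non-square case, but as you note this is only possible when $\dim\Hilb[E]\geq\dim\Hilb[S]$; otherwise the best one gets is $U_{S|E}U_{S|E}^{\dagger}=\Pi_S$, the range projector of $\rho_S$. (A concrete counterexample: $\Hilb[S]=\mathbb{C}^2$, $\Hilb[E]=\mathbb{C}$, $\rho_S=\Proj{0}$, $L_S=\Ket{0}$; no $1\times 2$ matrix $U$ can satisfy $UU^{\dagger}=I_2$.) So your added hypothesis is genuinely needed, and your proof is in that respect more careful than the paper's.
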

\begin{proof}
An operator of the form $L_S = \sqrt{\rho}_SU_{S|E}$ obviously satisfies \cref{def:Amplitude}.  For the other direction, assume $L_S$ is an amplitude operator.  Like any operator, it may be decomposed in its polar decomposition $L_S = P_S U_{S|E}$ where $P_S$ is a positive semi-definite operator on $\Hilb[S]$, and $U_{S|E}: \Hilb[E] \rightarrow \Hilb[S]$ is semi-unitary\footnote{The polar decomposition is often only defined for square matrices, in which case $\Hilb[E] = \Hilb[S]$ and $U_{S|E}$ is unitary.  Here, we use the generalization to non-square matrices (see e.g. \cite{BenIsrael2003}).}.  The definition of an amplitude operator then implies that $\rho_S = P_S U_{S|E} U_{S|E}^{\dagger} P_S = P_S^2$, so we must have $P_S = \sqrt{\rho}_S$.  
\end{proof}

Going back to the analogy between amplitudes and amplitude operators, multiplying an amplitude $\alpha$ by a phase factor $e^{i\phi}$ does not change the probability it represents.  Similarly, multiplying an amplitude operator $L_S$ by a semi-unitary $V_{E|E'}$, i.e.\ an operator $V_{E|E'}:\Hilb[E']\rightarrow\Hilb[E]$ satisfying $V_{E|E'} V_{E|E'}^{\dagger} = I_E$, on the right does not change the density operator it represents.  Although one might think it desirable to work directly with probabilities or density operators in order to eliminate these ambiguities, the mathematical manipulations we need to do in quantum mechanics are often linear in the amplitudes or amplitude operators, but would be nonlinear if you used probabilities or density operators.  Therefore, it is often more convenient to live with the ambiguity.

Since \emph{every} operator has a polar decomposition, the only requirement for $L_S$ to be an amplitude operator for \emph{some} density operator is that $\Tr[S]{L_S L_S^{\dagger}} = 1$.  If we want to work with unnormalized density operators, i.e.\ any positive operator, then any operator $L_S:\Hilb[E] \rightarrow \Hilb[S]$ is the amplitude operator for some (possibly unnormalized) density operator.  This is analogous to the fact that any vector in $\Hilb[S]$ represents a (possibly unnormalized) pure state.

The strategy for generalizing the Aharonov-Vaidman identity, and everything that follows from it, is to replace the state vector $\Ket{\psi}_S$ with an amplitude operator $L_S$.    The reason this works is that the space of linear operators mapping $\Hilb[E]$ to $\Hilb[S]$, which we denote $\mathfrak{L}_{S|E}$, is itself a Hilbert space with inner product $\Expect{L_S,M_S} = \Tr[E]{L_S^{\dagger}M_S}$, known as the \emph{Hilbert-Schmidt} inner product\footnote{By the cyclic property of the trace, we can also write $\Expect{L_S,M_S} = \Tr[S]{M_S L_S^{\dagger}}$.}.  Since the Aharonov-Vaidman identity is valid for any Hilbert space, it must be valid on $\mathfrak{L}_{S|E}$ as well.
\begin{proposition}[The Aharonov-Vaidman Identity for Operators]
	\label{Prop:AVOp}
	Let $A_S$ be a linear operator on a Hilbert space $\Hilb[S]$ and let $L_S:\Hilb[E] \rightarrow \Hilb[S]$.  Then,
	\begin{equation}
		\label{eg:GenAVOp}
		A_S L_S = \Expect{A_S} L_S + \left ( \Delta A_S \right ) L_{A_S}^{\perp},
	\end{equation}
	where $\Expect{A_S} = \Tr[S]{A_SL_SL_S^{\dagger}}/\Tr[S]{L_S L_S^{\dagger}}$, $\Delta A = \sqrt{\Expect{A_S^{\dagger}A_S} - \Abs{\Expect{A_S}}^2}$, and $L^{\perp}_{A_S}:\Hilb[E]\rightarrow \Hilb[S]$ is an amplitude operator that is orthogonal to $L_S$, i.e. $\Tr[E]{L^{\dagger}_S L^{\perp}_{A_S}} = 0$, satisfies $\Tr[S]{L^{\perp}_{A_S}L_{A_S}^{\perp \dagger}} = \Tr[S]{L_S L_S^{\dagger}}$, and depends on both $L_S$ and $A_S$.
\end{proposition}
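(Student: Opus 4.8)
The plan is to apply the already-established Aharonov--Vaidman identity (\cref{Prop:AV}) verbatim, but with the ambient Hilbert space taken to be $\mathfrak{L}_{S|E}$ equipped with the Hilbert--Schmidt inner product $\Expect{L_S,M_S} = \Tr[E]{L_S^{\dagger}M_S}$. In this application the role of the ``vector'' $\Ket{\psi}$ is played by the operator $L_S$, and the role of the ``linear operator'' is played by left multiplication, $\hat{A}_S : \mathfrak{L}_{S|E} \to \mathfrak{L}_{S|E}$ defined by $\hat{A}_S(M_S) = A_S M_S$. This map sends $\mathfrak{L}_{S|E}$ into itself and is linear in $M_S$, so \cref{Prop:AV} applies to it with no further ado; all the actual work is in checking that the scalars and vectors it produces translate back into the quantities written in the proposition.

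First I would compute the Hilbert--Schmidt adjoint of $\hat{A}_S$. For any $L_S, M_S \in \mathfrak{L}_{S|E}$ we have $\Expect{M_S,\hat{A}_S L_S} = \Tr[E]{M_S^{\dagger}A_S L_S} = \Expect{A_S^{\dagger}M_S, L_S}$, so $\hat{A}_S^{\dagger}$ is left multiplication by $A_S^{\dagger}$ and hence $\hat{A}_S^{\dagger}\hat{A}_S$ is left multiplication by $A_S^{\dagger}A_S$. Next I would match the scalar quantities using cyclicity of the trace (the traces below all make sense because $L_S^{\dagger}:\Hilb[S]\to\Hilb[E]$, $A_S:\Hilb[S]\to\Hilb[S]$, and $L_S:\Hilb[E]\to\Hilb[S]$): the expectation value of $\hat{A}_S$ in the ``state'' $L_S$ computed from the Hilbert--Schmidt structure is $\Expect{L_S,\hat{A}_S L_S}/\Expect{L_S,L_S} = \Tr[E]{L_S^{\dagger}A_S L_S}/\Tr[E]{L_S^{\dagger}L_S} = \Tr[S]{A_S L_S L_S^{\dagger}}/\Tr[S]{L_S L_S^{\dagger}}$, which is exactly $\Expect{A_S}$ as defined in the statement, and likewise $\Expect{\hat{A}_S^{\dagger}\hat{A}_S} = \Tr[S]{A_S^{\dagger}A_S L_S L_S^{\dagger}}/\Tr[S]{L_S L_S^{\dagger}} = \Expect{A_S^{\dagger}A_S}$, so the Hilbert--Schmidt standard deviation of $\hat{A}_S$ equals $\sqrt{\Expect{A_S^{\dagger}A_S} - \Abs{\Expect{A_S}}^2} = \Delta A_S$.

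With these identifications, \cref{Prop:AV} applied to $\hat{A}_S$ and $L_S$ gives $A_S L_S = \Expect{A_S} L_S + (\Delta A_S) L^{\perp}_{A_S}$, where $L^{\perp}_{A_S}$ is the vector playing the role of $\Ket{\psi^{\perp}_A}$ (so it depends on both $A_S$ and $L_S$), is Hilbert--Schmidt-orthogonal to $L_S$, i.e.\ $\Tr[E]{L_S^{\dagger}L^{\perp}_{A_S}} = 0$, and has equal Hilbert--Schmidt norm, i.e.\ $\Tr[E]{(L^{\perp}_{A_S})^{\dagger}L^{\perp}_{A_S}} = \Tr[E]{L_S^{\dagger}L_S}$; rewriting the latter with cyclicity of the trace gives $\Tr[S]{L^{\perp}_{A_S}(L^{\perp}_{A_S})^{\dagger}} = \Tr[S]{L_S L_S^{\dagger}}$, the form stated in the proposition. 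That $L^{\perp}_{A_S}$ is itself an amplitude operator requires no extra argument: as noted just before the proposition, every operator $\Hilb[E]\to\Hilb[S]$ is an amplitude operator for some (possibly unnormalized) positive operator. I expect the only mildly delicate point to be the bookkeeping in the adjoint and trace-cyclicity steps, since the operators involved move between the two distinct Hilbert spaces $\Hilb[S]$ and $\Hilb[E]$ and one must keep their domains and codomains straight; beyond that there is no real obstacle, which is precisely the point the construction is meant to illustrate.
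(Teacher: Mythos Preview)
Your proposal is correct and matches the paper's own approach exactly: the paper states that the proof is the same as that of \cref{Prop:AV} with the Hilbert--Schmidt inner product in place of the standard one, with the cyclic property of the trace used to massage expressions into the stated form, and leaves the details as an exercise. You have carried out precisely those details, including the identification of $\hat{A}_S^{\dagger}$ and the trace-cyclicity bookkeeping, so there is nothing to add.
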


The proof of this proposition is essentially the same as the proof of the vector Aharonov-Vaidman identity (\cref{Prop:AV}) with the standard inner product replaced by the Hilbert-Schmidt inner product.  The only difference is that the cyclic property of the trace is also needs to be used to write things in the exact form given in \cref{Prop:AVOp}.  I leave this as an exercise for the reader. 

Since $\rho_S = L_S L_S^{\dagger}$ is always a (possibly unnormalized) density operator, we can write
\[
	\Expect{A_S}  = \frac{\Tr[S]{A_SL_SL_S^{\dagger}}}{\Tr[E]{L_S^{\dagger}L_S}}  = \frac{\Tr[S]{A_S \rho_S}}{\Tr[E]{\rho_S}}.
\]
We can also introduce the density operator $\rho_{A_S}^{\perp} = L^{\perp}_{A_S} L_{A_S}^{\perp \dagger}$, which will be normalized in the same way as $\rho_S$, i.e., \Tr[S]{\rho_{A_S}^{\perp}} = \Tr[S]{\rho_S}.

When $L_S$ is normalized so that $\rho_S = L_S L_S^{\dagger}$ is a normalized density operator,  i.e., $\Tr[S]{L_SL_S^{\dagger}} = 1$, then $\rho_{A_S}^{\perp}$ is also normalized, i.e., \Tr[S]{\rho_{A_S}^{\perp}} = 1.

As defined, $\rho_{A_S}^{\perp} = L^{\perp}_{A_S} L_{A_S}^{\perp \dagger}$ looks like it depends on the choice of amplitude operator $L_S$.  In fact, it does not.  It only depends on $\rho_S$ and $A_S$.  To see this, rewrite the operator Aharonov-Vaidman identity as
\[
		L_{A_S}^{\perp} = \frac{1}{\Delta A_S} \left ( A_S  - \Expect{A_S} I_S \right ) L_S,
\]
and then we have,
\begin{align*}
\rho_{A_S}^{\perp} & = L^{\perp}_{A_S} L_{A_S}^{\perp \dagger} \\
& = \frac{1}{(\Delta A_S)^2} \left ( A_S  - \Expect{A_S} I_S \right ) L_S L_S^{\dagger} \left ( A_S^{\dagger}  - \Expect{A_S}^* I_S \right ) \\
& = \frac{1}{(\Delta A_S)^2} \left ( A_S  - \Expect{A_S} I_S \right ) \rho_S \left ( A_S^{\dagger}  - \Expect{A_S}^* I_S \right ),
\end{align*}
which is clearly independent of the choice of $L_S$.  Note that, although $L_S$ and $L_{A_S}^{\perp}$ are Hilbert-Schmidt orthogonal, $\rho_S$ and $\rho_{A_S}^{\perp}$ are generally not.

To generalize the results of this paper from state vectors to density operators, we replace the vector Aharonov-Vaidman identity with its operator counterpart applied to amplitude operators, and we replace the usual inner product with the Hilbert-Schmidt inner product.  In many cases, the final result is independent of the amplitude operator used to represent the state.  Although we use it in the proof, it drops out in the final result by only appearing in the combination $L_S L_S^{\dagger}$, as in the expression we derived for $\rho_{A_S}^{\perp}$.  In fact, the final formulas are usually the same as in the pure state case, except that we have to interpret $\Expect{A_S}$ as $\Tr[S]{A_S \rho_S}$ rather than $\Sand{\psi}{A_S}{\psi}$.

However, this is not true for the Maccone-Pati uncertainty relations and their generalizations, which do depend on the choice of amplitude operator $L_S$.  

\begin{theorem}[The First Maccone-Pati Uncertainty Relation for amplitude operators]
	\label{thm:MP1Op}
	Let $A_S$ and $B_S$ be Hermitian operators on a Hilbert space $\Hilb[S]$ and let $\rho_S$ be a normalized density operator on $\Hilb[S]$.  Then,
	\begin{equation}
		\label{eq:MP1Op}
		\left ( \Delta A \right  )^2 + \left ( \Delta B \right  )^2 \geq \pm i \Expect{[A,B]} + \Abs{\Tr[E]{L_S^{\perp \dagger} (A \mp i B)L_S}}^2,
	\end{equation}
	where $L_S: \Hilb[E] \rightarrow \Hilb[S]$ is any amplitude operator for $\rho_S$, and $L_S^{\perp}: \Hilb[E] \rightarrow \Hilb[S]$ is any normalized amplitude operator orthogonal to $L_S$ that has the same input space $\Hilb[E]$.
\end{theorem}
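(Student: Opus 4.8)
The plan is to transcribe the proof of \cref{thm:MP1} into the operator setting, replacing the state vector $\Ket{\psi}$ by the amplitude operator $L_S$, the orthogonal unit vector $\Ket{\psi^{\perp}}$ by $L_S^{\perp}$, and the ordinary inner product by the Hilbert--Schmidt inner product $\Tr[E]{M_S^{\dagger}N_S}$ on $\mathfrak{L}_{S|E}$. Just as in the pure--state case, it suffices to establish the version with the lower signs, namely
\[
\left ( \Delta A \right )^2 + \left ( \Delta B \right )^2 \geq -i\Expect{[A,B]} + \Abs{\Tr[E]{L_S^{\perp\dagger}(A + iB)L_S}}^2 ,
\]
obtained by applying the operator Aharonov--Vaidman identity to $A + iB$; the other version then follows by replacing $A + iB$ with $A - iB$.

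First I would apply \cref{Prop:AVOp} to the (non-Hermitian) operator $A + iB$ and the amplitude operator $L_S$ for $\rho_S$, which gives
\[
(A + iB) L_S = \left ( \Expect{A} + i\Expect{B} \right ) L_S + \Delta(A + iB)\, L_{A+iB}^{\perp} .
\]
Taking the Hilbert--Schmidt inner product of both sides with $L_S^{\perp}$ --- legitimate precisely because $L_S^{\perp}$ has the same input space $\Hilb[E]$ --- and using $\Tr[E]{L_S^{\perp\dagger} L_S} = 0$ makes the first term vanish, leaving
\[
\Tr[E]{L_S^{\perp\dagger}(A + iB)L_S} = \Delta(A + iB)\,\Tr[E]{L_S^{\perp\dagger} L_{A+iB}^{\perp}} .
\]
I would then take the modulus squared and use that $L_S^{\perp}$ and $L_{A+iB}^{\perp}$ are both unit vectors in $\mathfrak{L}_{S|E}$, so that $\Abs{\Tr[E]{L_S^{\perp\dagger} L_{A+iB}^{\perp}}} \leq 1$ by the unit--vector case of the Cauchy--Schwarz inequality (\cref{prop:CS}), obtaining $\left ( \Delta(A + iB) \right )^2 \geq \Abs{\Tr[E]{L_S^{\perp\dagger}(A + iB)L_S}}^2$. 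Finally I would expand $\left ( \Delta(A + iB) \right )^2 = \Expect{(A - iB)(A + iB)} - \Abs{\Expect{A + iB}}^2 = \left ( \Delta A \right )^2 + \left ( \Delta B \right )^2 + i\Expect{[A,B]}$ exactly as in \cref{thm:MP1} --- now reading $\Expect{\cdot}$ as $\Tr[S]{\cdot\,\rho_S}$ and using that $\Expect{A}$, $\Expect{B}$ are real and $\Expect{[A,B]}$ imaginary since $A$ and $B$ are Hermitian --- and rearrange to recover the displayed inequality, hence \cref{eq:MP1Op}.

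Because this is a line--by--line translation of a proof already carried out for pure states, I do not expect a genuine obstacle; the only things requiring care are bookkeeping. I would need to check that the normalization conventions built into \cref{Prop:AVOp}, under which $\Tr[S]{L_{A+iB}^{\perp} L_{A+iB}^{\perp\dagger}} = \Tr[S]{L_S L_S^{\dagger}} = 1$, together with the hypothesis that $L_S^{\perp}$ is a \emph{normalized} amplitude operator, make both of them genuine unit vectors for the Hilbert--Schmidt inner product, and that the ``same input space $\Hilb[E]$'' hypothesis is exactly what is needed for $\Tr[E]{L_S^{\perp\dagger} L_{A+iB}^{\perp}}$ and $\Tr[E]{L_S^{\perp\dagger} L_S}$ to be defined. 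One also has to dispatch the degenerate case $\Delta(A + iB) = 0$: then $L_{A+iB}^{\perp}$ is undefined, but the identity forces $\Tr[E]{L_S^{\perp\dagger}(A + iB)L_S} = 0$ as well as $\left ( \Delta A \right )^2 + \left ( \Delta B \right )^2 + i\Expect{[A,B]} = 0$, so \cref{eq:MP1Op} holds trivially.
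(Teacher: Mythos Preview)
Your proposal is correct and is exactly the approach the paper intends: the paper does not spell out a separate proof of \cref{thm:MP1Op}, but instructs the reader to rerun the pure-state argument with $L_S$ in place of $\Ket{\psi}$ and the Hilbert--Schmidt inner product in place of the ordinary one, which is precisely what you have done. Your bookkeeping checks on normalization, the shared input space $\Hilb[E]$, and the degenerate case $\Delta(A+iB)=0$ are all in order.
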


Note that, in order to obtain the tightest possible bound on $\left ( \Delta A \right  )^2 + \left ( \Delta B \right  )^2$, the right hand side of \cref{eq:MP1Op} should be maximized over all possible choices of $L_S$ and $L_S^{\perp}$.  To do this in practice, a bound on the largest dimension $d_E$ required to obtain the maximum is needed.  I conjecture that $d_E = 2d_S$ is sufficient because this allows $L_S$ and $L_S^{\perp}$ to have orthogonal kernels on $\Hilb[E]$, but I do not have a proof of this.

\begin{theorem}[The Second Maccone-Pati Uncertainty Relation for amplitude operators]
	\label{thm:MP2Op}
	Let $A_S$ and $B_S$ be linear operators on a Hilbert space $\Hilb[S]$ and let $\rho_S$ be a normalized density operator on $\Hilb[S]$.  Then,
	\begin{equation}
		\label{eq:MP2Op}
		\left ( \Delta A_S \right  )^2 + \left ( \Delta B_S \right  )^2 \geq \frac{1}{2} \Abs{\Tr[E]{L^{\perp \dagger}_{A_S + B_S} (A+B) L_S}}^2,
	\end{equation}
      where $L_S$ is any amplitude operator for $\rho_S$ and
	\[ L_{A_S+B_S}^{\perp} = \frac{1}{\Delta (A_S + B_S)} \left ( A_S + B_S  - \Expect{A_S + B_S} I_S \right ) L_S. \]
\end{theorem}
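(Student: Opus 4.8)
The plan is to follow the pure-state argument for \cref{thm:MP2} essentially verbatim, replacing the vector Aharonov-Vaidman identity by its operator form \cref{Prop:AVOp} and the ordinary inner product by the Hilbert-Schmidt inner product $\Expect{L_S,M_S} = \Tr[E]{L_S^{\dagger}M_S}$ on $\mathfrak{L}_{S|E}$. First I would record that the operator $L^{\perp}_{A_S+B_S}$ displayed in the statement is exactly the orthogonal amplitude operator that \cref{Prop:AVOp} produces when applied to $A_S+B_S$ acting on $L_S$: rearranging $(A_S+B_S)L_S = \Expect{A_S+B_S}L_S + \Delta(A_S+B_S)L^{\perp}_{A_S+B_S}$ gives precisely $L^{\perp}_{A_S+B_S} = \frac{1}{\Delta(A_S+B_S)}(A_S+B_S - \Expect{A_S+B_S}I_S)L_S$, and a short computation using $\Tr[S]{\rho_S}=1$ and the cyclic property of the trace verifies both that it is Hilbert-Schmidt orthogonal to $L_S$ and that it is normalized, $\Tr[E]{L^{\perp\dagger}_{A_S+B_S}L^{\perp}_{A_S+B_S}} = 1$.

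Next I would take the Hilbert-Schmidt inner product of that identity with $L^{\perp}_{A_S+B_S}$: orthogonality kills the $\Expect{A_S+B_S}L_S$ term and normalization of $L^{\perp}_{A_S+B_S}$ leaves $\Tr[E]{L^{\perp\dagger}_{A_S+B_S}(A_S+B_S)L_S} = \Delta(A_S+B_S)$, which is in particular real and non-negative. The remaining ingredient is the operator version of the sum relation, $\Delta(A_S+B_S)\leq \Delta A_S + \Delta B_S$. I would obtain this exactly as in \cref{prop:Sum}: applying \cref{Prop:AVOp} to $A_S+B_S$ in two ways — directly, and as $A_S L_S + B_S L_S$ — and subtracting yields the operator analogue of \cref{prop:PrimSum}, namely $\Delta(A_S+B_S)L^{\perp}_{A_S+B_S} = \Delta A_S\, L^{\perp}_{A_S} + \Delta B_S\, L^{\perp}_{B_S}$; pairing this with $L^{\perp}_{A_S+B_S}$, using that the left-hand side is real to pass to real parts, and applying the Hilbert-Schmidt Cauchy-Schwarz bound $\Abs{\Expect{L^{\perp}_{A_S+B_S},L^{\perp}_{A_S}}}\leq 1$ (valid since both operators are normalized) gives $\Delta(A_S+B_S)\leq \Delta A_S + \Delta B_S$.

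Combining, $\Delta A_S + \Delta B_S \geq \Delta(A_S+B_S) = \Abs{\Tr[E]{L^{\perp\dagger}_{A_S+B_S}(A_S+B_S)L_S}}$ (the quantity on the right being real and non-negative), so squaring both sides and then invoking the elementary real-number inequality $x^2+y^2\geq\frac{1}{2}(x+y)^2$ with $x=\Delta A_S$, $y=\Delta B_S$ yields $(\Delta A_S)^2+(\Delta B_S)^2 \geq \frac{1}{2}\Abs{\Tr[E]{L^{\perp\dagger}_{A_S+B_S}(A_S+B_S)L_S}}^2$, which is the claim. (The inequality $x^2+y^2\geq\frac{1}{2}(x+y)^2$ is proved in the excerpt's proof of \cref{thm:MP2}.)

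I expect the work to be almost entirely bookkeeping rather than conceptual. The excerpt states \cref{prop:PrimSum} and \cref{prop:Sum} only for state vectors, so one must first transcribe them to amplitude operators via the recipe already described in \cref{Mixed}. The one genuinely new point — the only place the argument is not a literal copy of the pure-state proof — is checking that the Hilbert-Schmidt norm behaves as expected, i.e.\ that $\Tr[E]{L^{\perp\dagger}_{A_S+B_S}L^{\perp}_{A_S+B_S}} = \Tr[S]{L^{\perp}_{A_S+B_S}L^{\perp\dagger}_{A_S+B_S}} = \Tr[S]{\rho^{\perp}_{A_S+B_S}} = 1$ when $\rho_S$ is normalized, which is exactly where the cyclic property of the trace is used. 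Beyond this I do not anticipate any real obstacle.
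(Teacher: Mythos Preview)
Your proposal is correct and follows exactly the approach the paper intends: the paper does not spell out a proof of \cref{thm:MP2Op} but explicitly says that the mixed-state results are obtained by transcribing the pure-state proofs with $L_S$ replacing $\Ket{\psi}$ and the Hilbert-Schmidt inner product replacing the ordinary one, which is precisely what you do (including re-deriving the operator analogues of \cref{prop:PrimSum} and \cref{prop:Sum} and then mirroring the proof of \cref{thm:MP2}). Your remark that the only non-automatic step is the normalization check $\Tr[E]{L^{\perp\dagger}_{A_S+B_S}L^{\perp}_{A_S+B_S}}=1$ via the cyclic property of the trace is apt and matches the paper's own comment that this is ``the only difference'' from the vector case.
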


In this case, to obtain the tightest bound, we have to maximize the right hand side over $L_S$.  We do not have to separately optimize over $L_{A_S+B_S}^{\perp}$ because it is a function of $L_S$, $A_S$ and $B_S$.  However, its dependence on $L_S$ makes the problem into a complicated nonlinear optimization.

\section{Summary and Conclusions}

\label{Conc}

In this paper, I discussed how the standard textbook uncertainty relations of Robertson and Schr{\"o}dinger can be derived from the Aharonov-Vaidman identity in a more direct way than the standard proof.  I also demonstrated the identity's usefulness in proving other uncertainty relations, such as the Maccone-Pati relations, and the quantum formulas for propagation of uncertainty.  Finally, I gave a mixed-state generalization of the Aharonov-Vaidman identity in terms of amplitude operators.  I hope that this has persuaded you that the Aharonov-Vaidman identity belongs in undergraduate textbooks and that it ought to be a first-line tool in proving relationships between standard deviations in quantum mechanics.  I am sure there are other uncertainty relations that have an elegant Aharonov-Vaidman based proofs, and I hope to find new and useful uncertainty relations that have not been discovered before via this method.

The Aharonov-Vaidman identity naturally gives rise to two quantum generalizations of the correlation, $\mathrm{corr}_{A,B}$ and $\mathrm{Rcorr}_{A,B}$.  It would be interesting to determine whether these quantities have an operational meaning in the case where $A$ and $B$ do not commute.  On the more formal side, perhaps there is a pseudo-probability representation of quantum mechanics, such as the Wigner function \cite{Wigner1932, Gross2006, Curtright2014} or the Kirkwood-Dirac distribution \cite{Kirkwood1933, Dirac1945, Lostaglio2022}, for which these are the correlations for observables as defined on the appropriate phase space.  This might help to find uses for the propagation of error formulas in cases where the observables do not commute.

\section*{Acknowledgments}

I would like to thank Yakir Aharonov for introducing me to the Aharonov-Vaidman identity and emphasizing its importance.  I would like to acknowledge (but not thank) the role played by the COVID19 pandemic shutdowns in giving me the opportunity to think about uncertainty relations and their pedagogy.  This research was supported in part by the Fetzer Franklin Fund of the John E. Fetzer Memorial Trust and by grant number FQXi-RFPIPW-1905 from the Foundational Questions Institute and Fetzer Franklin Fund, a donor advised fund of Silicon Alley Community Foundation.  This research was also supported in part by Perimeter Institute for Theoretical Physics.  Research at Perimeter Institute is supported by the Government of Canada through the Department of Innovation, Science, and Economic Development, and by the Province of Ontario through the Ministry of Colleges and Universities.

\bibliographystyle{unsrturl}
\bibliography{AharonovIdentity}

\appendix

\section{Amplitude Operators and Purifications}

\label{App:Purify}

\begin{proposition}
   Given a density operator $\rho_S$ on a Hilbert space $\Hilb[S]$, let $\Hilb[S']$ be another copy of the same Hilbert space and let $\{\Ket{j}\}$ be an orthonormal basis for $\Hilb[S]$ and $\Hilb[S']$.  Define the vector
\[\Ket{\Phi^+}_{SS'} = \sum_j \Ket{j}_S \Ket{j}_{S'}.\]

Let $L_S:\Hilb[E] \rightarrow \Hilb[S]$ be an amplitude operator for $\rho_S$ and let $\{\Ket{k}_E\}$ be an orthonormal basis for $\Hilb[E]$.  Then $I_S \otimes L_{S'}^T \Ket{\Phi^+}_{SS'}$ is a purification of $\rho_S$, where $^T$ denotes transpose in the $\Outer{j}{k}_{SE}$ basis.  Similarly, if $\Ket{\psi}_{SE} \in \Hilb[S]\otimes\Hilb[E]$ is a purification of $\rho_S$ then $L_S = \Bra{\psi^*}_{S'E}\Ket{\Phi^+}_{SS'} $ is an amplitude operator for $\rho_S$, where $^*$ denotes complex conjugation in the $\Ket{jk}_{S'E}$ basis.
\end{proposition}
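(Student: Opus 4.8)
The plan is to recognise this proposition as a concrete presentation of the standard vector--operator (Choi-type) isomorphism between the operator space $\mathfrak{L}_{S|E}$ and the vector space $\Hilb[S] \otimes \Hilb[E]$, and then to verify the two stated formulas by expanding everything in the fixed bases $\{\Ket{j}\}$ (for $\Hilb[S]$ and $\Hilb[S']$) and $\{\Ket{k}_E\}$ (for $\Hilb[E]$), using only $\BraKet{k}{k'}_E = \delta_{kk'}$, $\BraKet{j}{j'} = \delta_{jj'}$, and the definition of the partial trace. Concretely, writing $L_S = \sum_{jk} L_{jk}\, \Outer{j}{k}_{SE}$, the correspondence in question sends $L_S$ to the vector $\Ket{\psi}_{SE} = \sum_{jk} L_{jk}\, \Ket{j}_S \Ket{k}_E$ with the \emph{same} array of components, and the whole content of the proposition is that this correspondence intertwines $L_S \mapsto L_S L_S^{\dagger}$ with $\Ket{\psi}_{SE} \mapsto \Tr[E]{\Proj{\psi}_{SE}}$.

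First I would handle the forward direction. Using $L_{S'}^T \Ket{j}_{S'} = \sum_k L_{jk} \Ket{k}_E$, which is just the definition of the transpose in the $\Outer{j}{k}_{SE}$ basis with the input relabelled from $S$ to $S'$, a one-line computation gives $(I_S \otimes L_{S'}^T) \Ket{\Phi^+}_{SS'} = \sum_{jk} L_{jk}\, \Ket{j}_S \Ket{k}_E =: \Ket{\psi}_{SE}$. Tracing out $E$ then yields $\Tr[E]{\Proj{\psi}_{SE}} = \sum_{j j' k} L_{jk} L_{j'k}^*\, \Outer{j}{j'}_S$, and I would observe that this is term-by-term equal to $L_S L_S^{\dagger}$, hence to $\rho_S$, so $\Ket{\psi}_{SE}$ is a purification. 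The only point requiring care is that $L_{S'}^T$, so defined, is a map $\Hilb[S'] \rightarrow \Hilb[E]$, so that $I_S \otimes L_{S'}^T$ carries $\Hilb[S] \otimes \Hilb[S']$ into $\Hilb[S] \otimes \Hilb[E]$ --- the correct home for a purification of $\rho_S$.

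For the reverse direction I would run the same calculation backwards. Expanding a given purification as $\Ket{\psi}_{SE} = \sum_{jk} \psi_{jk}\, \Ket{j}_S \Ket{k}_E$, so that $\rho_S = \Tr[E]{\Proj{\psi}_{SE}} = \sum_{j j' k} \psi_{jk} \psi_{j'k}^*\, \Outer{j}{j'}_S$, I would contract $\Bra{\psi^*}_{S'E}$ --- the complex conjugate of $\Ket{\psi}$ in the $\Ket{jk}$ basis with $S$ relabelled to $S'$ --- against $\Ket{\Phi^+}_{SS'}$; the $S'$ indices collapse via $\BraKet{j}{j'}_{S'} = \delta_{jj'}$ and one is left with $L_S = \sum_{jk} \psi_{jk}\, \Outer{j}{k}_{SE}$, i.e.\ the matrix elements of $L_S$ are literally the components $\psi_{jk}$ of the purification. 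Then $L_S L_S^{\dagger} = \sum_{j j' k} \psi_{jk} \psi_{j'k}^*\, \Outer{j}{j'}_S = \rho_S$, so $L_S$ is an amplitude operator. Comparing the two computations also shows that the two constructions are mutual inverses once $\Hilb[S']$ is identified with the reference copy of $\Hilb[S]$, which is what makes precise the claim that amplitude operators and purifications carry the same information.

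The underlying calculations are entirely routine, so the only real obstacle is bookkeeping: keeping track of which of the three labels $S$, $S'$, $E$ each ket, bra, transpose and complex conjugate belongs to, and applying the transpose and conjugation conventions in the prescribed bases so that the contractions against $\Ket{\Phi^+}_{SS'}$ go through cleanly. Once the index conventions are pinned down, each of the two identities drops out in a couple of lines.
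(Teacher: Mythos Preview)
Your proposal is correct and follows essentially the same route as the paper: expand $L_S$ and $\Ket{\psi}_{SE}$ in the fixed bases, push the indices through the contractions with $\Ket{\Phi^+}_{SS'}$, and check that $L_S L_S^{\dagger}$ and $\Tr[E]{\Proj{\psi}_{SE}}$ have the same matrix elements. The only cosmetic difference is that in the forward direction the paper works directly with the partial trace of the projector $I_S \otimes L_{S'}^T \Proj{\Phi^+}_{SS'} I_S \otimes L_{S'}^*$, whereas you first identify the purification vector $\sum_{jk} L_{jk}\Ket{j}_S\Ket{k}_E$ and then trace; your Choi-isomorphism framing and the remark that the two constructions are mutual inverses are nice additions not made explicit in the paper.
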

\begin{proof}
If $L_S$ is an amplitude operator for $\rho_S$ then $\rho_S = L_S L_S^{\dagger}$.  We have to show that this implies that $\Tr[E]{I_S \otimes L_{S'}^T \Proj{\Phi^+}_{SS'} I_S \otimes \left ( L_{S'}^T \right )^{\dagger}} = \rho_S$.  Note that $\left ( L_{S'}^T \right )^{\dagger} = L_{S'}^*$, where $^*$ denotes complex conjugate in the $\Outer{j}{k}_{SE}$ basis.  Therefore, we have
\begin{align}
	\Tr[E]{I_S \otimes L_{S'}^T \Proj{\Phi^+}_{SS'} I_S \otimes  L_{S'}^* } & = \sum_{j,k} \Outer{j}{k}_S \Tr[E]{L_{S'}^T \Outer{j}{k}_{S'}L_{S'}^*} \\
& = \sum_{j,k} \Outer{j}{k}_S \Sand{k}{L_S^* L_S^T}{j}_S,
\end{align}
where we have changed the index $S'$ to $S$ because they refer to the same Hilbert space and $\Sand{k}{L_S^T L_S^*}{j}_S$ is a scalar.  Rearranging this, we have
\begin{align}
\Tr[E]{I_S \otimes L_{S'}^T \Proj{\Phi^+}_{SS'} I_S \otimes  L_{S'}^* } & = \sum_{j,k} \Ket{j}_S \Sand{k}{L_S^*L_S^T}{j}_S  \Bra{k}_{S} \\
& = \sum_{j,k} \Proj{j}_S \left ( L_S^* L_S^T \right )^T \Proj{k}_S \\
& = \left ( L_S^* L_S^T \right )^T = L_S L_S^{\dagger} = \rho_S.
\end{align}

For the other direction, we have to prove that $L_S L_S^{\dagger} = \rho_S$, where $L_S = \Bra{\psi^*}_{S'E}\Ket{\Phi^+}_{SS'} $ and $\Ket{\psi}_{SE}$ is any purification of $\rho_S$, i.e.\ $\Tr[E]{\Proj{\psi}_{SE}} = \rho_S$.  

First, let $\Ket{\psi}_{SE} = \sum_{jk} \alpha_{jk} \Ket{j}_S \otimes \Ket{k}_E$ be the decomposition of $\Ket{\psi}_{SE}$ in the $\Ket{jk}_{SE}$ basis.  We have $\Ket{\psi^*}_{SE} = \sum_{jk} \alpha_{jk}^* \Ket{j}_S \otimes \Ket{k}_E$ and the condition  $\Tr[E]{\Proj{\psi}_{SE}} = \rho_S$ is equivalent to $\sum_{j,k,l} \alpha_{jk}\alpha_{lk}^* \Outer{j}{l}_S = \rho_S$.  Note also that $\Bra{j}_{S'}\Ket{\Phi^+}_{SS'} = \Ket{j}_S$.

Hence, we have
\begin{align}
	L_S L_S^{\dagger} & = \left ( \Bra{\psi^*}_{S'E}\Ket{\Phi^+}_{SS'} \right ) \left ( \Bra{\psi^*}_{S'E}\Ket{\Phi^+}_{SS'} \right )^{\dagger} \\
	& =  \Bra{\psi^*}_{S'E}\Ket{\Phi^+}_{SS'} \Bra{\Phi^+}_{SS'} \Ket{\psi^*}_{S' E} \\
	& = \sum_{jklm} \alpha_{jk}  \Bra{j}_{S'}\Bra{k}_{E'} \Proj{\Phi^+}_{SS'} \alpha^{*}_{lm} \Ket{l}_{S'} \Ket{m}_E \\
	& = \sum_{jklm} \alpha_{jk}\alpha^*_{lm} \BraKet{k}{m}_E  \left ( \Bra{j}_{S'}\Ket{\Phi^+}_{SS'} \right ) \left ( \Bra{\Phi^+}_{SS'} \Ket{l}_{S'} \right ) \\
	& = \sum_{jkl} \alpha_{jk} \alpha^*_{lk} \Outer{j}{l}_{S} \\
	& = \rho_S.
\end{align}
\end{proof}

\end{document}